\PassOptionsToPackage{dvipsnames}{xcolor}
\documentclass[conference]{IEEEtran}

\newif\ifnotes

\usepackage[T1]{fontenc}
\usepackage[utf8]{inputenc}
\usepackage{microtype}
\usepackage{graphicx}
\usepackage{mathtools}
\usepackage{booktabs}
\usepackage{multirow}
\usepackage{tabularx}
\usepackage{enumitem}
\usepackage{xspace}
\usepackage{pifont}
\usepackage{relsize}
\usepackage{subcaption}
\usepackage{siunitx} 
\usepackage{listings}
\usepackage{tcolorbox}
\usepackage{flushend}
\usepackage{mdframed}
\usepackage{comment}
\usepackage{wrapfig}
\usepackage{enumitem}

\usepackage{tikz}
\usetikzlibrary{positioning,arrows.meta,calc}

\usepackage[colorlinks,
citecolor=violet,
urlcolor=black,
linkcolor=violet,
bookmarks=false,
hypertexnames=true]{hyperref}

\usepackage[obeyFinal]{todonotes}
\ifnotes
  \usepackage[nomargin,inline,draft]{fixme}
\else
  \usepackage[nomargin,inline,final]{fixme}
\fi
\fxusetheme{color}
\fxuseenvlayout{color}

\newcommand{\jd}[1]{\ifnotes{\textbf{\textcolor{blue}{JD: #1}}}\fi}
\newcommand{\rc}[1]{\ifnotes{\textbf{\textcolor{red}{RC: #1}}}\fi}

\usepackage[plain]{algorithm2e}
\usepackage{algpseudocode}
\usepackage{algcompatible}
\newcommand{\comalgo}[1]{\textcolor{Maroon}{/\!/\,#1}}

\usepackage{amsthm}
\usepackage{amssymb}
\newtheorem{definition}{Definition}
\newtheorem{theorem}{Theorem}
\newtheorem{lemma}{Lemma}

\newcommand{\nameTitle}{Beware}
\newcommand{\name}{\nameTitle}

\IEEEoverridecommandlockouts
\begin{document}



\title{Robust and Automated Reconfiguration of Byzantine Wide-Area Replication}



\author{
    \IEEEauthorblockN{Rowdy Chotkan, Bulat Nasrulin, Johan Pouwelse and Jérémie Decouchant}
   \IEEEauthorblockA{
     Delft University of Technology, The Netherlands\\
     \{r.m.chotkan-1, b.nasrulin, j.a.pouwelse, j.decouchant\}@tudelft.nl
   }
   \thanks{This work was partially funded by NWO/TKI grant BLOCK.2019.004.}
}

\maketitle

\begin{abstract}

Distributed systems handle adversarial nodes through redundancy, which imposes a significant performance overhead. In blockchain systems, Byzantine fault-tolerant state-machine replication (BFT-SMR) is the replicated service that totally orders client transactions before execution. While prior research has primarily focused on designing novel consensus algorithms with improved performance, recent studies have shown that further gains can be achieved through configuration optimization. More precisely, replicas can monitor network latency to dynamically assign the leader role and tune voting weights, thereby improving consensus performance. However, we identify three vulnerabilities in this process that Byzantine nodes can exploit. To address these weaknesses, we propose \name{}, a reconfiguration framework that filters out falsified latency reports, computes robust weight distributions, and applies machine learning to converge towards Byzantine-resilient configurations. Our evaluation shows that \name{} reduces consensus latency by up to 45\% compared to existing solutions. 
\end{abstract}

\begin{IEEEkeywords}
Self-optimization, reconfiguration, weighted voting, Byzantine fault tolerance, state machine replication.
\end{IEEEkeywords}




\section{Introduction}




Blockchains rely on globally distributed replicas to reach consensus. While this distribution enhances resilience and decentralization, it incurs higher, variable network latency, which directly affects the performance of consensus protocols~\cite{de2022noise,uta2018performance,iosup2011performance}.
Crucially, unlike in local networks, latencies in Wide Area Networks (WANs) are highly variable over time, especially across cloud regions or public network infrastructure. As illustrated in Fig.~\ref{fig:wonder_latency}, WAN link latencies can fluctuate by tens of percent within a single day (approximately 20\% in this example) due to dynamic and unpredictable factors such as routing changes and network congestion. Moreover, these variations are not evenly distributed and tend to affect certain regions disproportionately.
To maintain consistently high performance, WAN systems must therefore support dynamic reconfiguration to adapt in real-time to shifting network conditions.

Several mechanisms have been proposed to allow blockchain consensus systems to improve their performance automatically, such as optimistically reducing the system size based on a threat detector~\cite{silva2021threat} or selecting a faster leader~\cite{amir2010prime}. Furthermore, Berger et al.~\cite{berger2020aware,berger2024chasing} demonstrated that the latency of the seminal Practical Byzantine Fault Tolerance (PBFT) algorithm can be reduced in geo-distributed settings by: (i) incorporating additional nodes (i.e., using $3f{+}1{+}\Delta$ replicas instead of $3f{+}1$); (ii) employing two distinct voting weights ($V_\text{min}$ and $V_\text{max}$); and (iii) applying optimization techniques to assign weights and roles (leader or backup) to replicas. This optimization allows for smaller quorums consisting of only $2f{+}1$ replicas with a $V_\text{max}$ weight, compared to the larger quorums of $\lceil \frac{n{+}f{+}1}{2} \rceil$ required in uniformly weighted schemes~\cite{bessani2014state}.

\begin{figure}[t]
    \centering
    \includegraphics[width=0.9\linewidth]{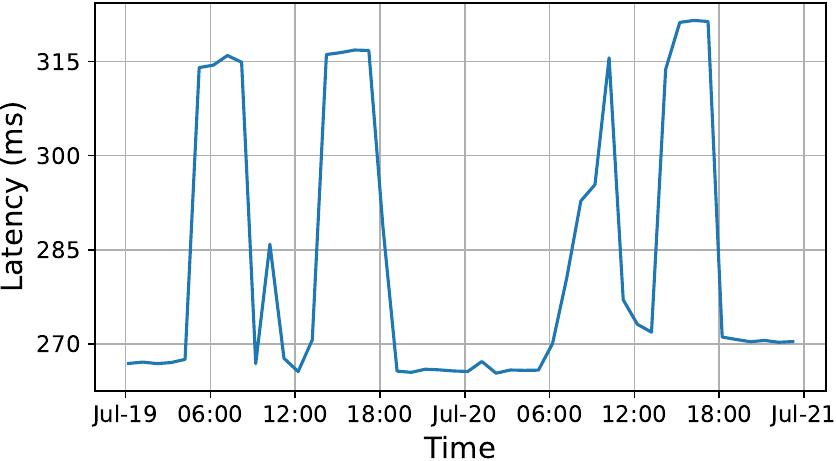}
    \caption{Measured round-trip time between Sydney and Dubai over a 48-hour period (WonderNetwork, July 19--21, 2020).}
    \label{fig:wonder_latency}
\end{figure}

These self-reconfiguration methods rely on replicas themselves to commit their network latencies through the main consensus algorithm, independently compute how the system should be reconfigured, and adopt the new configuration. This approach faces two significant challenges: (i) faulty nodes may report arbitrary latencies, leading the system to be reconfigured into suboptimal configurations; and (ii) faulty nodes may gather high voting weight and later on intentionally delay consensus messages, increasing the overall latency of blockchain consensus. As our experiments demonstrate, these attacks severely compromise reconfiguration effectiveness (see Sec.~\ref{sec:latency_poisoning}). 
Addressing these challenges is crucial to the practical adoption of self-optimization frameworks, ensuring robust, low-latency performance even in adversarial environments.
In practice, this requires deeper integration of system-level telemetry and the ability to distinguish between honest variation and adversarial manipulation.
 
We present \name{} as a solution to these challenges. To filter out inconsistent or malicious latency reports, \name{} uses a robust Virtual Coordinate System (VCS)~\cite{dabek2004vivaldi,seibert2013newton} that sanitizes latency reports. VCSs are traditionally designed for large-scale, permissionless systems in which nodes rely on partial local information. Whereas, in our setting, nodes have access to all latency reports, since they are committed through the main consensus algorithm. \name{} filters out malicious latency reports by leading each replica to locally build a VCS that is iteratively constructed using a clustering method that is unaffected by incorrect latencies.
To limit the impact of Byzantine nodes delaying messages on performance, we extend the weighted voting theory to establish sufficient and efficiently verifiable conditions for a set of weights to define a Byzantine quorum dissemination system (see Theorem~\ref{thm:threshold}, proven in Sec.~\ref{sec:theory}) and efficiently support an arbitrary number of possible voting weights (instead of only two in the literature~\cite{sousa2015separating,berger2020aware}). We demonstrate that supporting a more diverse set of weights enables a more graceful performance degradation when faulty nodes delay consensus messages. Lastly, to adopt a system configuration that maintains low latency despite Byzantine nodes, we employ a machine-learning prediction model. This model learns from past reconfigurations to predict the latency of a given configuration, ensuring optimal performance and faster convergence under adversarial conditions. 

As a summary, we make the following contributions:

\begin{itemize}[leftmargin=*]
\item We identify three key limitations in existing self-optimizing BFT-SMRs that degrade the efficacy of current reconfiguration algorithms. First, Byzantine nodes can prevent the identification of an efficient configuration by sharing poisoned latency reports. Second, the limited number of voting weights employed in current weighted voting algorithms increases their sensitivity to Byzantine faults. Finally, Byzantine nodes can slow down a consensus algorithm by deviating from it, a phenomenon that network latencies alone cannot capture. For each of these three weaknesses, we demonstrate how corresponding Byzantine attacks can degrade consensus performance.

\item We formulate, for the first time, sufficient conditions for a weighted quorum system to be a Byzantine dissemination quorum system (Sec.~\ref{sec:theory}). Verifying whether a candidate quorum system with an arbitrary number of distinct weights satisfies these inequalities is low-complexity (the heaviest operation is sorting the weights). The only comparable method we are aware of is the exhaustive method, which involves computing all quorums, verifying that each quorum is always available, and ensuring that the intersection of any two quorums is large enough to guarantee safety. However, this method suffers from exponential time complexity.

\item We design a BFT-SMR reconfiguration framework, \name{}, that combines several technical innovations. \name{} first leverages a robust, clustering-based virtual coordinate system tailored to fully connected and wide-area adversarial networks. It further extends the weighted voting theory to accommodate a more diverse set of weights, enabling more graceful performance degradation under attack. Finally, \name{} incorporates machine learning to avoid configurations where faulty nodes are likely to increase latency.

\item We evaluate \name{} using realistic simulations and a deployment, in which network conditions are drawn from the WonderNetwork dataset~\cite{wondernetwork}. Our results indicate that \name{} outperforms the state of the art in consensus latency, matrix sanitization, and resilience to adversarial networks. Specifically, \name{} improves the average consensus latency of the PBFT algorithm by up to 45\%  compared to the state-of-the-art reconfiguration framework.
\end{itemize}

This paper is organized as follows. 
Sec.~\ref{sec:model} details our system model, and Sec.~\ref{sec:background} provides some necessary background knowledge on self-reconfiguring BFT-SMR systems.
Sec.~\ref{sec:theory} details conditions that can be efficiently checked to verify whether a weighted quorum system is a Byzantine dissemination quorum system, i.e., whether it is always available and consistent.    
Sec.~\ref{sec:system} presents \name{}, our BFT-SMR reconfiguration framework. 
Sec.~\ref{sec:perfeval} evaluates the performance of \name{} and compares it with the state-of-the-art reconfiguration approaches. Sec.~\ref{sec:sota} discusses the related work. Finally, Sec.~\ref{sec:conclusion} concludes this paper. 

\section{System model} 
\label{sec:model}
    
We consider a system $\Pi = \{p_1, \cdots, p_N\}$ of $N=3f{+}1+\Delta$ nodes. The system operates under the assumption of a static adversary controlling up to $f$ Byzantine replicas that may deviate arbitrarily from a specified protocol. The $N$ replicas are tasked with executing a weighted version of the seminal PBFT~\cite{castro1999practical} partially-synchronous BFT-SMR algorithm. We define the configuration of the consensus system as the distribution of special roles to replicas (e.g., leader or backup), their voting weights, and protocol parameters.  

We assume that replicas are interconnected via a wide-area network, where latencies are heterogeneous and may vary over time. In this context, the system's configuration must be periodically re-evaluated (e.g., after a given number of consensus decisions) to assign voting weights to each replica and attribute the leader's role. Overprovisioning the system by using $\Delta$ additional replicas expands the set of possible configurations the system can use to better adapt to network changes. Replicas independently monitor network latencies and commit them through the consensus algorithm. Faulty replicas may report incorrect latencies or disrupt the protocol execution by delaying the transmission of messages they are expected to send. For simplicity, we assume that replicas experience similar latency distributions across all clients and therefore focus on minimizing consensus latency. In Sec.~\ref{sec:sota}, we discuss how one could instead focus on end-to-end latency.

\section{Reconfiguring Weighted BFT Systems}
\label{sec:background}

In this section, we detail our system model, which follows the standard assumptions of existing reconfiguration frameworks. We then describe the state-of-the-art approach to Byzantine reconfiguration, and identify three problems it faces that limit its effectiveness in adversarial settings.

\subsection{State-of-the-Art Consensus Reconfiguration}

    


\begin{figure*}[ht]
    \centering
    \includegraphics[width=\textwidth]{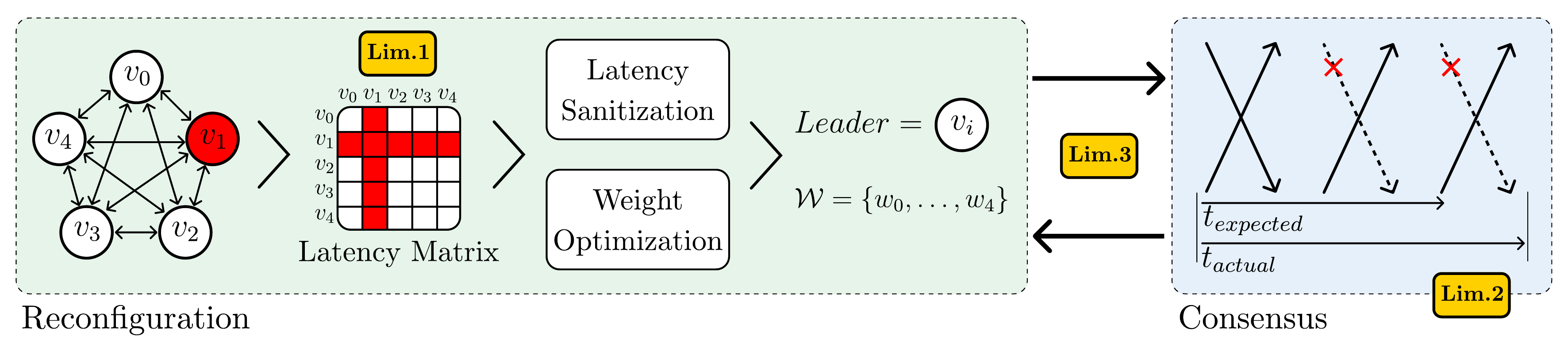}
    \caption{Illustration of limitations 1 to 3, which can undermine a reconfiguration process where replicas commit and sanitize a latency matrix via consensus, optimize voting weights and roles, and adopt the resulting configuration.}
    \label{fig:system_problems}
\end{figure*}

A common approach to improving performance in Byzantine fault-tolerant systems involves dynamic reconfiguration based on observed network conditions~\cite{lamport2010reconfiguring}. 
Following this approach, Aware~\cite{berger2020aware} continuously monitors network conditions and periodically adjusts the configuration of BFT-SMaRt~\cite{bessani2014state}, an implementation of the classical PBFT~\cite{Castro+Liskov:osdi:pbft:1999} algorithm that consists of three communication phases (Propose, Write, and Accept) per request. More specifically, Aware distributes WHEAT~\cite{sousa2015separating}'s voting weights and the leader role among replicas to minimize consensus latency. In a Byzantine system of $n=3f+1+\Delta$ replicas, WHEAT affects a voting weight $v_{max} = (1 + \Delta/f)$ to $2f$ replicas and a voting weight $v_{min} = 1$ to other replicas. Aware aims to assign voting weights and the leader role to better-connected replicas to accelerate quorum formation, with a quorum weight of $2(f+\Delta)+1$.

The reconfiguration process unfolds in five deterministic steps, ensuring all correct replicas reach the same conclusion.

\textbf{Measuring Latencies.} Each correct replica continuously measures the point-to-point network latencies to every other replica in the system. Measurements are one-sided to ensure integrity against faulty replicas. Measurements are taken for critical consensus messages, primarily those of the Propose phase, where a leader broadcasts a block of transactions, and the Write and Accept phases, which contain all-to-all broadcasts used for commitment. 

\textbf{Disseminating and Synchronizing Latency Measurements.} Replicas broadcast their measured latency vectors to all others using the totally ordered underlying BFT-SMaRt consensus protocol.
This ensures that all correct replicas eventually converge on and update their local copies of the global latency matrices, $M^{P}$ (for the Propose phase) and $M^{W}$ (for the Write and Accept phases), with the same, synchronized set of raw measurements.

\textbf{Sanitizing Latency Matrices.}
After a predefined number of consensus instances, all replicas deterministically sanitize their synchronized latency matrices, yielding $\hat{M}^{P}$ and $\hat{M}^{W}$.
This step is crucial to mitigate the influence of malicious (Byzantine) replicas that attempt to submit false or misleading latency reports, thereby ensuring robust decision-making.  Matrices are sanitized by replacing each latency entry with the largest of its two symmetric values (i.e., $M[i,j] = \max(M[i,j], M[j,i]))$. Latency entries that replicas have not committed are set to $+\infty$. 

\textbf{Identifying the Optimal Configuration.} 
Each replica independently and deterministically solves an optimization problem using the sanitized matrices.
Replicas use a latency-prediction function to simulate consensus latency for various candidate configurations (changes in voting-weight distributions and/or leader positions) and rely on simulated annealing to identify the configuration that minimizes consensus latency. 

\textbf{Triggering Reconfiguration.}
If the identified optimal configuration offers a performance improvement exceeding a predetermined threshold (to avoid unnecessary, marginal reconfigurations), all replicas independently obtain the same optimal configuration.
They then collectively trigger a view change, allowing replicas to adopt their new voting weight and role, thereby completing the reconfiguration loop.

 

\subsection{Limitations}

In the following, we describe three weaknesses that can be exploited to degrade the efficacy of current reconfiguration frameworks, illustrated as Lim.~1--3 in Fig.~\ref{fig:system_problems}. \\

\textbf{Limitation 1: Poisoned latency matrix.} Byzantine nodes can communicate incorrect network latencies. While previous approaches rely on simple sanitization methods, such as symmetrizing the latency matrix by replacing each value with the maximum of its mirrored counterpart across the diagonal~\cite{berger2020aware}, we experimentally show that these methods are not immune to Byzantine attacks and do not ensure efficient system reconfiguration. Interestingly, while it was known that pairs of Byzantine nodes can underestimate their latency to appear better connected, we discovered that Byzantine nodes that overestimate their latency to other nodes can also degrade reconfiguration performance. \\

\textbf{Limitation 2: Under-explored Byzantine quorum systems.} In a given system configuration, Byzantine nodes may fail to process incoming messages or refrain from sending messages. As a result, the consensus protocol would achieve lower performance than predicted by the reconfiguration framework. Unlike the emission of conflicting signed messages~\cite{civit2021polygraph}, this type of Byzantine fault cannot be identified under the fluctuating network conditions we consider. Because current weighted voting systems consider only at most two possible voting weights, the impact of these deviations is severe. \\

\textbf{Limitation 3: Ineffective reconfigurations.} A reconfiguration framework deterministically computes the configuration that should deliver the best performance, assuming that all nodes adhere faithfully to the consensus algorithm. However, when this assumption is not met, existing reconfiguration methods repeatedly identify the same underperforming configuration and remain in it. It is therefore necessary for reconfiguration protocols to measure a configuration's actual performance and learn from it to affect future reconfigurations.

\section{Generalizing Weighted Dissemination Quorum Systems}
\label{sec:theory}


Before detailing \name{}, our solution to the aforementioned limitations, we first present some theoretical results that enable \name{} to more effectively explore the space of possible voting weights, thereby increasing the number of system configurations it considers.  
To the best of our knowledge, we are the first to explicitly define formulae for the availability and consistency of Byzantine weighted quorum systems and to formally define the weight requirements of a Byzantine weighted quorum. These theoretical developments enable us to design an optimization algorithm that generates potential voting weights for the nodes in a given consensus system and efficiently verifies whether these weights satisfy the properties of a Byzantine quorum.  
 
This section recalls the definition of a Byzantine weighted dissemination quorum system. It then establishes a sufficient condition for a set of weights to constitute such a quorum system. This condition can be verified in linear time, enabling \name{} to efficiently explore the space of possible replica weight assignments in order to minimize consensus latency.  

\subsection{Dissemination quorum systems}

BFT-SMR systems rely on the concept of Byzantine quorums to reach decisions. In particular, they use dissemination quorums to ensure the self-verification of information, such as signed user transactions. Our objective is to identify Byzantine weighted dissemination quorums and latency-optimal weights for system reconfiguration. Definition~\ref{def:bqs} provides the formal definition of dissemination quorums.

\begin{definition}
A quorum system $\mathcal{Q}$ is a dissemination quorum system~\cite{malkhi1998byzantine} for a fail-prone system $\mathcal{B}$ if the following properties are satisfied.\\
$\bullet$ \textbf{Availability:} $\forall B \in \mathcal{B}, \exists Q \in \mathcal{Q} : B \cap Q = \emptyset$ \\
$\bullet$ \textbf{Consistency:} $\forall Q_1, Q_2 \in \mathcal{Q}, \forall B \in \mathcal{B} : Q_1 \cap Q_2 \nsubseteq B$
\label{def:bqs}
\end{definition}

Instantiating the fail-prone system $\mathcal{B}$ to capture all possible Byzantine sets of size $f$, availability therefore implies that for any possible set of faulty replicas, there exists at least one quorum consisting solely of correct replicas. Consistency guarantees that any two quorums intersect in at least one correct replica.

\subsection{A Sufficient Condition for Weighted Dissemination Quorum Systems}

In the context of weighted voting, Byzantine quorums are defined by their total weight, which must exceed a weight threshold. While previous works~\cite{sousa2015separating} relied on ad-hoc weight assignments, we generalize this concept by formally defining the requirements of weighted dissemination quorum systems.

We assume that each process $P_i$ is assigned a voting weight $w_i \in \{w_1, \cdots, w_N\}$ such that $w_1 \ge \cdots \ge w_{N}$. 
We consider a quorum system $\mathcal{W}$, and note $Q_T$ a weight threshold that is used to define $\mathcal{W}$, i.e., $\mathcal{W}$ is the set of all subsets of $\Pi$ whose total weight is larger than or equal to $Q_T$. 
We note $w(Q)$ the combined weight of the replicas in a set $Q \in \mathcal{W}$.
In the following, we establish sufficient conditions over $Q_T$ for $\mathcal{W}$ to be a dissemination quorum system. 

\begin{lemma} \label{lemma:available}
    $\mathcal{W}$ is available iff. $Q_T {\le} \sum_{i \in [1,N]} w_i {-} \sum_{i \in [1,f]} w_i$.
\end{lemma}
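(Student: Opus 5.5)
The plan is to prove the two directions separately. Both rest on one observation. Fix a fail-prone set $B$ with $|B| \le f$. If some quorum $Q$ avoids $B$, then so does $\Pi \setminus B$, because $Q \subseteq \Pi \setminus B$ and weights are nonnegative, so $w(\Pi\setminus B) \ge w(Q) \ge Q_T$. Conversely, if $w(\Pi \setminus B) \ge Q_T$, then $\Pi\setminus B$ is itself a member of $\mathcal{W}$ disjoint from $B$. Hence availability for $B$ is equivalent to
\[
w(\Pi\setminus B) = \sum_{i\in[1,N]} w_i - w(B) \ge Q_T .
\]
Availability of $\mathcal{W}$ is therefore equivalent to requiring this for every admissible $B$, that is, to
\[
Q_T \le \sum_{i \in [1,N]} w_i - \max_{|B|\le f} w(B).
\]

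Next, I evaluate the maximum. Since $w_1 \ge \cdots \ge w_N \ge 0$, a standard exchange argument shows that $\max_{|B| \le f} w(B) = \sum_{i\in[1,f]} w_i$. Any set of at most $f$ indices has total weight at most the sum of the $f$ largest weights: sort its elements and compare the $k$-th one with $w_k$, then pad with nonnegative terms if $|B|<f$. The bound is attained by $B^\ast = \{P_1,\dots,P_f\}$.

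Putting these together gives both directions. For ($\Leftarrow$), assume $Q_T \le \sum_i w_i - \sum_{i\le f} w_i$. Then for every $B$ we have $w(\Pi\setminus B) \ge \sum_i w_i - \sum_{i \le f} w_i \ge Q_T$, so $\Pi\setminus B \in \mathcal{W}$ witnesses availability. For ($\Rightarrow$), take the worst case $B^\ast$. Availability gives some $Q$ with $Q\cap B^\ast=\emptyset$ and $w(Q)\ge Q_T$. Since $Q \subseteq \{P_{f+1},\dots,P_N\}$, we get $Q_T \le w(Q) \le \sum_{i\in[f+1,N]} w_i$, which is exactly the stated bound.

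The only real subtlety is the modelling assumptions, not the calculation. I rely on weights being nonnegative, which is implicit for voting weights. I also rely on $\mathcal{B}$ containing all sets of size exactly $f$, or at least $B^\ast$, as stated after Definition~\ref{def:bqs}. If $\mathcal{B}$ also includes smaller sets, the bound is unchanged, because they are dominated by $B^\ast$.
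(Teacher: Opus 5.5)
Your proof is correct and follows the same route as the paper. For ($\Rightarrow$) you use the worst-case fail-prone set of the $f$ heaviest nodes, and for ($\Leftarrow$) you bound $w(B)$ by the sum of the $f$ largest weights so that $\Pi\setminus B \in \mathcal{W}$. You are actually more careful than the paper in ($\Rightarrow$): the paper simply asserts that the complement of the $f$ heaviest nodes lies in $\mathcal{W}$, whereas your step $Q \subseteq \{P_{f+1},\dots,P_N\}$ with nonnegative weights is what justifies it, and the lemma genuinely fails if negative weights are allowed.
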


\begin{proof}

\textbf{$(\Rightarrow)$} 
Let us assume that $\mathcal{W}$ is available, and consider the set $Q$ of the $N-f$ nodes with the lowest weights. 
The value of $w(Q)$ is $\sum_{i \in [1,N]} w_i - \sum_{i \in [1,f]} w_i$.
By assumption, $Q \in \mathcal{W}$, because the system needs to be available if the $f$ nodes with the highest weights are faulty, which means that $Q_T \le w(Q)$. 
We then obtain $Q_T {\le} \sum_{i \in [1,N]} w_i {-} \sum_{i \in [1,f]} w_i$.

\textbf{$(\Leftarrow)$} Let us assume that $Q_T \le \sum_{i \in [1,N]} w_i - \sum_{i \in [1,f]} w_i$. 
Let us consider a set $B$ of $f$ faulty nodes. 
The total weight $w(B)$ of this set is lower than or equal to the sum of the $f$ largest weights $\sum_{i \in [1,f]} w_i$. 
The combined weight of the set $\Pi \setminus B$ of the $N-f$ correct nodes, $\sum_{i \in [1,N]} w_i - w(B)$, is then larger than or equal to $\sum_{i \in [1,N]} w_i - \sum_{i \in [1,f]} w_i$. By assumption, we then obtain that $\Pi \setminus B \ge Q_T$, which means that $\Pi \setminus B \in \mathcal{W}$ and that $\mathcal{W}$ is available.  
\end{proof}

\begin{lemma} \label{lemma:safe}
    If $\left( \sum_{i \in [1,N]} w_i {+} \sum_{i \in [1, f]} w_i \right) {/} 2 < Q_T$ then $\mathcal{W}$ is consistent. 
\end{lemma}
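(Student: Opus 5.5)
We argue by contradiction, using inclusion--exclusion on weights together with the fact that any set of at most $f$ replicas weighs at most $\sum_{i\in[1,f]} w_i$. Let $W=\sum_{i\in[1,N]} w_i$ and $F=\sum_{i\in[1,f]} w_i$. Assume $(W+F)/2<Q_T$, and suppose $\mathcal{W}$ is not consistent. Then there exist quorums $Q_1,Q_2\in\mathcal{W}$ and a fail-prone set $B\in\mathcal{B}$ with $|B|\le f$ such that $Q_1\cap Q_2\subseteq B$. We will show that the weight of $Q_1\cap Q_2$ is too small for two sets of weight at least $Q_T$ each to overlap only there.

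First, we bound the weight of the intersection. Since weights are nonnegative and $Q_1\cap Q_2\subseteq B$, we have $w(Q_1\cap Q_2)\le w(B)$. As $B$ contains at most $f$ replicas and $w_1\ge\cdots\ge w_N$, we get $w(B)\le F$, by the same observation used in the $(\Leftarrow)$ direction of Lemma~\ref{lemma:available}.

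Second, we apply inclusion--exclusion. Since $Q_1\cup Q_2\subseteq\Pi$,
\[
w(Q_1)+w(Q_2)=w(Q_1\cup Q_2)+w(Q_1\cap Q_2)\le W+F .
\]
Membership in $\mathcal{W}$ gives $w(Q_1),w(Q_2)\ge Q_T$. Hence $2Q_T\le W+F$, i.e.\ $Q_T\le (W+F)/2$, which contradicts the hypothesis. Therefore $Q_1\cap Q_2\nsubseteq B$ for every $Q_1,Q_2\in\mathcal{W}$ and every $B\in\mathcal{B}$, which is exactly consistency in Definition~\ref{def:bqs}.

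The argument needs one implicit assumption, which we state explicitly: weights are nonnegative. This is what makes subset weights monotone and guarantees that the $f$ largest weights dominate any $f$-subset. The step requiring care is the intersection bound, where it matters that $B$ ranges over sets of size at most $f$ (the fail-prone system), and not just over the weight-maximal $f$-set. The inequality $w(B)\le F$ covers all such sets uniformly, so the contradiction holds for every $B\in\mathcal{B}$. The same argument also handles the degenerate case $Q_1=Q_2$.
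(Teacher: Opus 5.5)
Your proof is correct and follows essentially the paper's argument. Both proceed by contradiction, derive $w(Q_1)+w(Q_2)\le w(Q_1\cap Q_2)+\sum_{i\in[1,N]}w_i$ from $Q_1\cup Q_2\subseteq\Pi$, bound the intersection by $\sum_{i\in[1,f]}w_i$, and use $Q_T\le w(Q_1),w(Q_2)$. The differences are cosmetic: you bound the intersection through $B$ rather than through $|Q_1\cap Q_2|\le f$, and you state explicitly the nonnegativity of weights, which the paper uses implicitly.
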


\begin{proof}


    By contradiction, let us assume that the implication is false, i.e., that $( \sum_{i \in [1,N]} w_i {+} \sum_{i \in [1, f]} w_i ) {/} 2 < Q_T$ and that $\mathcal{W}$ is not consistent.
    Let us consider two sets $Q_1, Q_2 \in \mathcal{W}$ such that $|Q_1 \cap Q_2|$ contains $f$ or less replicas. These sets are guaranteed to exist because $\mathcal{W}$ is not consistent.
    The union of $Q_1$ and $Q_2$ contains at most all the replicas, i.e., 
    $w(Q_1 \cup Q_2) \le \sum_{i \in [1, N]} w_i$, which can be rewritten as:
    \begin{equation} \label{eq:inequation}
    w(Q_1) + w(Q_2) \le w(Q_1 \cap Q_2) + \sum_{i \in [1, N]} w_i  
    \end{equation}

    Let us now make two observations. First, the total weight of $Q_1 \cap Q_2$ is lower than or equal to the combined weight of the $f$ heaviest replicas, because $Q_1 \cap Q_2$ contains $f$ or fewer replicas. This can be written as $w(Q_1 \cap Q_2) \le \sum_{i \in [1, f]} w_i$.
    Second, $Q_T \le w(Q_1)$ and $Q_T \le w(Q_2)$ because $Q_1, Q_2 \in \mathcal{W}$.
    Using those two observations in Inequation~\ref{eq:inequation}, we obtain
    $2 Q_T \le \sum_{i \in [1, f]} w_i + \sum_{i \in [1, N]} w_i$, which contradicts our assumption.    
\end{proof}



\begin{theorem} \label{thm:threshold}
Let $\Pi = \{p_1, \cdots, p_{N}\}$ be a system where each node $p_i$ has a weight $w_i$, with $w_1 \ge w_2 \ge \cdots \ge w_N$. 
Let us consider a weight threshold $Q_T$ (if any) that verifies 
\[
\left( \sum_{i \in [1,N]} w_i + \sum_{i \in [1, f]} w_i \right) / 2 < Q_T \le \sum_{i \in [1,N]} w_i - \sum_{i \in [1,f]} w_i.
\]
The set of all subsets of $\Pi$ whose total weight $Q_T$ is 
larger than or equal to $Q_T$ is a Byzantine dissemination quorum system.
\end{theorem}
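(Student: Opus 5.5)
The theorem follows by combining Lemma~\ref{lemma:available} and Lemma~\ref{lemma:safe}. Definition~\ref{def:bqs} has exactly two requirements, availability and consistency, and the two inequalities bounding $Q_T$ in the statement correspond one-to-one to the hypotheses of these lemmas. Throughout, I instantiate the fail-prone system $\mathcal{B}$ as the family of all subsets of $\Pi$ of size at most $f$, as done after Definition~\ref{def:bqs}. I take $\mathcal{W}$ to be the set of subsets of $\Pi$ whose total weight is at least $Q_T$; this is what the statement means, despite the typo ``whose total weight $Q_T$''.

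\textbf{Availability.} The right-hand inequality $Q_T \le \sum_{i\in[1,N]} w_i - \sum_{i\in[1,f]} w_i$ is the ($\Leftarrow$) direction of Lemma~\ref{lemma:available}, so $\mathcal{W}$ is available. Concretely, for any $B\in\mathcal{B}$ the complement $\Pi\setminus B$ has weight at least $\sum_{i\in[1,N]} w_i - \sum_{i\in[1,f]} w_i \ge Q_T$. Hence $\Pi\setminus B$ is a quorum disjoint from $B$. Sets $B$ with fewer than $f$ members are covered as well, since their weight is still at most $\sum_{i\in[1,f]} w_i$.

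\textbf{Consistency.} The left-hand strict inequality is exactly the hypothesis of Lemma~\ref{lemma:safe}, so $\mathcal{W}$ is consistent. The argument behind that lemma is as follows. For any $Q_1,Q_2\in\mathcal{W}$, inclusion--exclusion gives $w(Q_1)+w(Q_2)\le \sum_{i\in[1,N]} w_i + w(Q_1\cap Q_2)$. Combined with $w(Q_j)\ge Q_T$, this gives
\[
w(Q_1\cap Q_2) \ge 2Q_T - \sum_{i\in[1,N]} w_i > \sum_{i\in[1,f]} w_i .
\]
No set of at most $f$ replicas can exceed the weight of the $f$ heaviest ones. Therefore $Q_1\cap Q_2$ contains more than $f$ replicas, and in particular $Q_1\cap Q_2\nsubseteq B$ for every $B\in\mathcal{B}$.

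The only step needing care is this last translation. Lemma~\ref{lemma:safe} gives $|Q_1\cap Q_2|>f$, and I must check that this matches the set-theoretic condition $Q_1\cap Q_2\nsubseteq B$ of Definition~\ref{def:bqs}. It does, because every $B$ has $|B|\le f$. I would also remark that the theorem is vacuous when the interval for $Q_T$ is empty, which is why the statement says ``(if any)''. Checking non-emptiness amounts to sorting the weights and comparing two sums, which justifies the low-complexity claim. Putting the two parts together, $\mathcal{W}$ satisfies both properties of Definition~\ref{def:bqs} and is therefore a Byzantine dissemination quorum system.
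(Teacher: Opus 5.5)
Your proposal is correct and follows the paper's proof, which declares the theorem ``immediate'' from Definition~\ref{def:bqs} and Lemmas~\ref{lemma:available} and~\ref{lemma:safe}. Your direct rederivation of the consistency bound and your explicit check that $|Q_1\cap Q_2|>f$ gives $Q_1\cap Q_2\nsubseteq B$ only spell out what the paper leaves implicit; as in the paper, this tacitly uses nonnegative weights (for $w(Q_1\cup Q_2)\le\sum_{i\in[1,N]} w_i$ and for bounding sets of fewer than $f$ replicas by $\sum_{i\in[1,f]} w_i$), which is harmless for voting weights.
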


\begin{proof}
    Immediate using Definition~\ref{def:bqs}, and Lemmas~\ref{lemma:available} and~\ref{lemma:safe}.
\end{proof}


\noindent
Note that in the egalitarian case, i.e., when every node has a unitary voting weight, Theorem~\ref{thm:threshold} implies that a Byzantine quorum requires at least $\lceil \frac{n{+}f{+}1}{2} \rceil$ votes, which is a known formula. Similarly, given Aware's~\cite{berger2020aware} weight distribution, Theorem~\ref{thm:threshold} provides $2f{+} 2\delta{+}1/2$ as a strict lower bound for the weight threshold, which confirms the $2f{+}2\delta{+}1$ weight threshold of Aware.
We leverage Theorem~\ref{thm:threshold} to efficiently verify whether a set of weights can form a Byzantine-weighted dissemination quorum system. Note that Theorem~\ref{thm:threshold} might not characterize all possible Byzantine dissemination quorum systems because Lemma~\ref{lemma:safe} is not an equivalence. \\

\begin{figure*}
    \centering
    \includegraphics[width=0.9\linewidth]{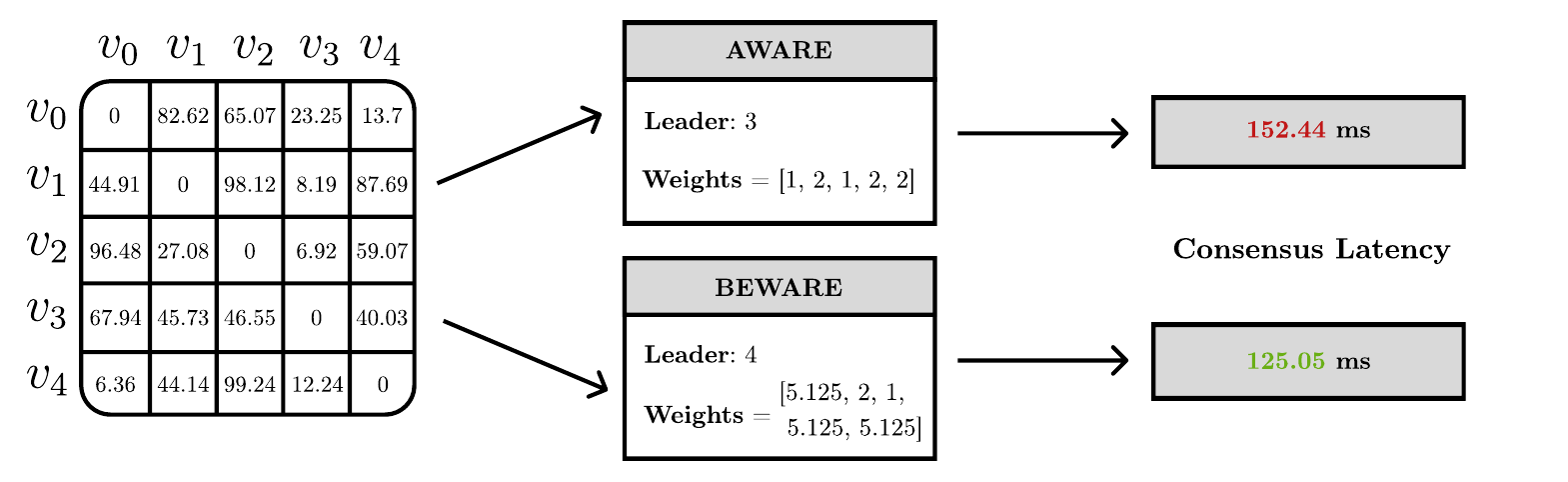}
    \caption{Given identical network latencies, Beware's leader and weight selection yields approx. $125$\,ms consensus latency, outperforming Aware's approx. $152$\,ms.}
    \label{fig:consensus_example}
\end{figure*}

\textbf{Example.} Fig.~\ref{fig:consensus_example} showcases an example in which considering more than two voting weights, and doing so in more flexible ways, following Theorem~\ref{thm:threshold}, enables new configurations that further improve performance.
In this example, we consider $n=5$ replicas with $f=\Delta=1$. All replicas are correct and collectively report the indicated latency matrix. In the best configuration that Aware identifies, replica $3$ is the leader, and the replicas are assigned weights $[2,1,1,2,2]$, using $5$ as the weight threshold to obtain a configuration where PBFT's consensus latency equals approx. $152$\,ms. On the other hand, Beware is able to identify a better configuration where replica $4$ would act as the leader and replicas would be given weights $[5.125, 2, 1, 5.125, 5.125]$, and use a weight threshold $Q_T$ such that $11.75 < Q_T \le 13.25$ (e.g., 12), which would decrease the consensus latency down to approx. $125$\,ms (an 18\% improvement over Aware). Note that a seemingly close configuration that Aware could have identified would select replica $4$ as the leader and choose weights $[2, 1, 1, 2, 2]$. However, the latency of this configuration would be approx. $160$\,ms. While this example only considers correct nodes, the gap between Aware and Beware is sometimes even larger (up to $50$\%) when faulty replicas are present (cf. Fig.~\ref{fig:fluctuations}).

\section{\name{}}
\label{sec:system}

    \name{} follows the approach pioneered by state-of-the-art reconfiguration frameworks while addressing their limitations.
    To do so, \name{} combines three novel methods. Succinctly, \name{} obtains a reliable view of network latencies in wide-area networks through a novel Byzantine-resilient virtual coordinate system, explores a more diverse set of voting weights to enhance performance under diverse conditions, and achieves accurate performance predictions of a given configuration based on historical configurations and their observed outcomes. 
    

    \jd{have illustration with all components involved}

    \subsection{Sanitizing the Latency Matrix}

    To address \textbf{Limitation~1}, \name{} sanitizes latency matrices using a Byzantine-resilient clustering defense~\cite{nguyen2022flame,fereidooni2023freqfed,cox2025catalyst}, originally designed for federated learning, that we integrate within a virtual coordinate system~\cite{dabek2004vivaldi,seibert2013newton}. 
    
    Unlike classical VCS algorithms---decentralized and typically tailored to peer-to-peer systems---our approach is centralized but replicated. Given that all nodes that decide on a new configuration eventually need to obtain a consistent latency matrix, the sanitization process can either be executed by all nodes or delegated to $2f{+}1$ nodes, which would then disseminate the sanitized matrix to all nodes. For ease of presentation, we focus here on the former option. 
    Algorithm~\ref{alg:our_vcs} details the pseudocode of our VCS-based sanitization method, which takes as input the latency matrix $RTT$, where ${RTT}_{i,j}$ represents the (possibly incorrect) latency reported by node $i$ for its connection with node $j$.
    Initially, every node $j$ is assigned random coordinates in the virtual space and an error $e_j$, which expresses the system's confidence in the accuracy of the node's coordinates.
    This algorithm performs $T$ iterations, adjusting the position of node $i$ based on the forces exerted by the other nodes.
    The force exerted by a node $j$ is computed based on the difference between ${RTT}_{i,j}$, the reported latency between nodes $i$ and $j$, and the distance between nodes $i$ and $j$ in the virtual space. It is directed along the unitary vector $\overrightarrow{u}_{i,j}$ from nodes $i$ to $j$ (Alg.~\ref{alg:our_vcs}, l.~\ref{line:cluster_force}). Additionally, this force is scaled based on the coordinate errors $e_i$ and $e_j$ (Alg.~\ref{alg:our_vcs}, l.~\ref{line:cluster_coordinate_errors}), such that greater uncertainty in the coordinates of nodes $i$ and $j$ results in a reduced norm of the force. Note that if both $e_i$ and $e_j$ are $0$, the computation would involve division by zero. In this edge case, node $j$'s influence is discarded, as it exerts no force on node $i$. For simplicity, these steps are omitted in the pseudocode. Two parameters, $c_e$ and $c_c$, control the extent to which the coordinate error and node coordinates are updated during each iteration.

    To filter out harmful Byzantine forces, Alg.~\ref{alg:our_vcs} groups the forces that all nodes exert on node $i$ into clusters using \textsc{HDBSCAN}~\cite{campello2013density}, based on their cosine distance (Alg.~\ref{alg:our_vcs}, l.~\ref{line:cluster_clustering}).
    Given up to $f$ Byzantine nodes in a system of at least $3f{+}1$ nodes, the cluster containing the majority of forces is dominated by correct nodes, while any remaining Byzantine influence that belongs in this cluster has little impact. 
     Forces in the largest cluster are clipped to their median norm and then averaged to update node $i$ 's coordinate (Alg.~\ref{alg:our_vcs}, l.~\ref{line:cluster_update_coordinate}). 

     In our implementation, we use a $3$-dimensional space, with $e_j=0.1$, $c_e=c_c=0.25$, and $T=1,000$, based on empirical tuning. These values strike a good balance between responsiveness and stability during convergence, yielding stable and accurate virtual coordinates across our test scenarios.
     In a system of $N=100$ replicas, Algorithm~\ref{alg:our_vcs} runs in approximately $160$\,s for $T=1,000$ iterations. Note that, based on Fig.~\ref{fig:wonder_latency}, this is sufficient to address real-life latency variations. In addition, the reconfiguration process is executed periodically and asynchronously at epoch boundaries and does not block ongoing consensus decisions. In our current implementation of \name{}, this code is currently written in Python and can therefore be optimized.

    \RestyleAlgo{boxruled}
    \LinesNumbered
    \begin{algorithm}[t]
    \begin{algorithmic}[1]
    \State \textbf{Input:} Latency matrix ${RTT}$ of size $N \times N$
    \State \textbf{Output:} Sanitized latency matrix
    \State \textbf{Parameters:} $0 < c_c, c_e < 1$ 
    \item[]
    
    \State \textbf{for} $iter$ \textbf{in} $range(T)$:  
        \State \hspace*{1em} \textbf{for} $i$ \textbf{in} $[1, N]$:
            \State \hspace*{2em} \textbf{for} $j$ \textbf{in} $[1, N] \setminus \{i\}$: \comalgo{forces exerted on $i$}
                \State \hspace*{3em} $w[j] = e_i / (e_i + e_j)$ \label{line:cluster_coordinate_errors}
                \State \hspace*{3em} $e_s[j] = \left| ||x_i - x_j|| - {RTT}_{ij} \right| / {RTT}_{ij}$
                \State \hspace*{3em} $\overrightarrow{F}[j] = w[j] \cdot ({RTT}_{ij} - ||x_i - x_j||) \cdot \overrightarrow{u}_{i,j}$ \label{line:cluster_force}
        
        \State 
        \State \hspace*{2em} $(b_1, {\cdots}, b_L) = \textsc{Cluster}(\{\overrightarrow{F}_{j \neq i}[j]\})$ \label{line:cluster_clustering} \comalgo{Largest cluster}
        \SetAlgoCaptionSeparator{}
        \State \hspace*{2em} $\overline{e_s} = \textsc{Mean}(e_s[b_1], \cdots, e_s[b_L])$
        \State \hspace*{2em} $\overline{w} = \textsc{Mean}(w[b_1], {\cdots}, w[b_L])$
        \State \hspace*{2em} $e_i = c_e {\cdot} 
        \frac{\sum_{k \in [1, L]}{w[b_k] \cdot e_s[b_k]}}{\sum_{k \in [1, L]}{w[b_k]}} + (1 - c_e) {\cdot} e_i$
        \State
        \State \hspace*{2em} $\overline{|F|} = \textsc{Median}(|\overrightarrow{F}[b_1]|, \cdots, |\overrightarrow{F}[b_L]|)$
        \State \hspace*{2em} $x_i = x_i + c_c \cdot \textsc{Mean}\left(\overrightarrow{F}[b_1], \cdots, \overrightarrow{F}[b_L]\right) / \overline{|F|}$ \label{line:cluster_update_coordinate}
    
    \State
    \State \textbf{return}\,$\{ || x_i {-} x_j ||,\,{\forall}1\,{\le} i,j {\le} N \}$ \comalgo{Sanitized latency matrix}
    \end{algorithmic}
    \vspace{2mm}
    \caption{Robust centralized VCS construction using clustering and clipping of forces.} 
    \label{alg:our_vcs}
    \end{algorithm}

\subsection
{Optimizing Byzantine Weighted Dissemination Quorums for Latency}  

To find latency‐efficient weighted quorum assignments and address \textbf{Limitation~2}, \name{}  can explore the joint space of possible leader choices and node weights using one of two optimization approaches: Simulated Annealing (SA) and Differential Evolution (DE). \jd{looks like there are 2 solutions, keep the best}
As a consequence, \name{} supports an arbitrary number of distinct voting weights; each candidate configuration explored during optimization must be verified as a valid Byzantine dissemination quorum system. Naively checking whether a candidate set of weights defines a Byzantine quorum system requires enumerating all possible quorums and verifying safety and liveness (see Def.~\ref{def:bqs}), which is exponential in complexity. Instead, we derive sufficient conditions that can be verified in linear time and that are summarized in Theorem~\ref{thm:threshold}. \\




\textbf{Simulated Annealing}
extends the approach introduced by Berger et al.~\cite{berger2020aware} to arbitrary weights: starting from the current configuration (a leader and a weight vector), it iteratively mutates the configuration to reduce latency while gradually \textit{cooling down}. Concretely, at each SA iteration, a neighboring configuration is generated by either mutating the voting weights or by choosing a different leader. Weight mutations can involve: (i) small incremental changes ($\pm 1$ to individual weights), (ii) random replacement of selected weights with values from the allowed range $[1,10]$, or (iii) heuristic-based adjustments that guide weights toward configurations more likely to satisfy quorum constraints. The newly proposed configuration is accepted if it yields lower latency, or---if it increases latency---with probability $\exp\bigl(-\Delta_{\text{latency}}/T\bigr)$, where $\Delta_{\text{latency}}$ is the latency increase and $T$ is a temperature parameter updated as $T \leftarrow T \cdot (1 - \theta)$. Early in the run, when $T$ is high, SA can escape local minima by occasionally accepting worse configurations. As $T$ cools down, it becomes increasingly unlikely to accept inferior solutions. While SA has the advantage of requiring only a few hyperparameters, its local mutation moves and fixed annealing schedule tend to converge slowly in high-dimensional weight spaces and often get trapped once the temperature is low.

\textbf{Differential Evolution}
is a population‐based evolutionary algorithm~\cite{storn1997differential}. Each candidate in DE's population of configurations is an $(n+1)$-dimensional vector comprising $N$ continuous weight values and a single integer representing the leader index. We maintain a population of $20$ such candidates. At each generation, for every target vector $\mathbf{x}_t$ in the population, DE selects three distinct individuals $(\mathbf{x}_a,\mathbf{x}_b,\mathbf{x}_c)$ uniformly at random and forms a mutant vector
\(
  \mathbf{m} \;=\; \mathbf{x}_a \;+\; F \,\bigl(\mathbf{x}_b - \mathbf{x}_c\bigr)
\), 
where the scaling factor $F$ is typically set to $0.5$. To ensure that every coordinate stays within the valid range (i.e., weights in $[0.25,2.5]$ and leaders in $\{0,1,\dots,n-1\}$), we clip the mutant accordingly. A trial vector is then produced by performing crossover between $\mathbf{m}$ and $\mathbf{x}_t$. For each dimension, we inherit from $\mathbf{m}$ with probability $C_r=0.7$ or from $\mathbf{x}_t$ otherwise. After rounding the leader coordinate to the nearest valid integer, we evaluate the trial's consensus latency and compare it with that of $\mathbf{x}_t$. If the trial has strictly lower latency, it replaces $\mathbf{x}_t$ in the population; otherwise, $\mathbf{x}_t$ survives for the next generation. 

Crucially, we bias the initial DE population so that weight vectors are drawn from a truncated Beta$(5,2)$ distribution on the interval $[0.25,2.5]$ rather than from a uniform distribution. This heuristic avoids extreme all‐ones or all‐twos vectors at the outset, thereby accelerating convergence toward promising regions of the search space while preserving sufficient diversity to explore substantially different configurations. Because DE evaluates $20$ candidates in parallel each generation and relies on differential combinations of existing solutions, it naturally explores the high-dimensional space far more efficiently than SA's single-chain, random-swap approach. Empirically, DE converges to near-optimal weight and leader configurations in far fewer consensus-round evaluations; it is also far less prone to getting stuck in local minima. For these reasons, DE is chosen as the default optimizer in all of our experiments, with SA retained purely as a baseline for direct comparison. Whenever we report ``\name{} (DE)'' in our evaluation, it refers to the version that uses DE with Beta$(5,2)$ initialization; ``\name{} (SA)'' refers to the SA baseline following Berger et al.~\cite{berger2020aware}.

While global optimization techniques identify latency-efficient Byzantine dissemination quorums, they inherently involve predicting the latency of candidate configurations in adversarial settings. 
For this purpose, \name{} integrates predictive modeling into the optimization loop.

\subsection{Predicting Latency and Converging towards the Best Configuration}

\name{} attaches a lightweight predictor to its optimization loop to avoid getting stuck in suboptimal configurations (\textbf{Limitation~3}) or having to temporarily adopt and monitor each candidate configuration, which would be inefficient. Each replica keeps a fixed-size sliding window of the last $W$ reconfigurations. Every entry stores the leader identifier, the $N$ voting weights, and the actual latency observed once the configuration became active. When a new measurement arrives, the oldest record is discarded, ensuring that the training set always reflects current network conditions and does not overfit to stale patterns.

The predictor is trained and periodically retrained in two phases.
During a short \emph{bootstrapping} phase, the optimizer explores normally
and logs $\langle\text{configuration},\text{latency}\rangle$
pairs; after $S=10$ examples, the first model is fit using
XGBoost~\cite{xgboost}.  
Subsequently, whenever the optimizer proposes a candidate
$\mathbf{x}$, it first asks the model for a latency estimate
$\hat{\ell}(\mathbf{x})$.  
An expensive performance monitoring of a candidate configuration is performed only if
\(
\bigl|\hat{\ell}(\mathbf{x})-\ell_{\text{obs}}\bigr|
    >\varepsilon_{\text{desired}},~
\varepsilon_{\text{desired}} = 10^{-2}
\).
If the estimate is accurate, the optimizer can accept or reject
$\mathbf{x}$ immediately; otherwise the fresh ground-truth measurement
$\ell_{\text{obs}}$ is inserted into the window, improving the model.
A \emph{random audit} with probability $p_{\text{check}}=0.1$ per step
forces an explicit measurement even when the error stays just below
$\varepsilon_{\text{desired}}$, ensuring that rare but large drifts are
eventually detected.
The model is retrained whenever the
running relative error
\(
\frac{1}{W}\sum_{i=1}^{W}
\frac{\lvert\hat{\ell}_i-\ell_i\rvert}{\ell_i}
\) 
exceeds $\varepsilon_{\text{retrain}} = 3\times10^{-2}$. 
Retraining is cheap because $W$ is capped, and XGBoost scales linearly in
the number of samples. \\

\textbf{Feature representation.}
For a system with $N$ replicas, we encode a configuration as an
$(N{+}1)$-dimensional real vector
\(
\mathbf{x}= [\,w_1,\dots,w_N,L\,]^{\!\top},
\) 
where $w_i\in[0.25,2.5]$ is the weight of replica~$i$ and
$L\in\{0,\dots,N-1\}$ is the leader index
(normalised to $[0,1]$ before fitting). 
We empirically found that the $[0.25, 2.5]$ weight interval allows \name{} to efficiently explore the weight space and identify better-weighted quorums in a short amount of time. This range is justified by two primary factors: first, it constrains weights to positive values while maintaining a 10:1 ratio between the maximum and minimum bounds; and second, it provides a contained search window, which facilitates the time-efficient exploration of weight configurations.
XGBoost captures the non-linear interactions between weights and the leader
choice; nevertheless, for transparency, we also keep an ordinary
least-squares fit
\(
\hat{\ell}(\mathbf{x})=\beta_0+\sum_{j=1}^{N}\beta_j w_j+\beta_{N+1}L
    +\varepsilon
\),
which runs $\approx 15\%$ less accurately than the boosted model, but
serves as a simple sanity check.
With $W\!=\!200,\,S\!=\!10,\,
\varepsilon_{\text{desired}}\!=\!0.01,\,
\varepsilon_{\text{retrain}}\!=\!0.03,\,
p_{\text{check}}\!=\!0.1$,
the predictor reduces the number of full consensus-round evaluations by
$\sim 60$--$70\%$.

\section{Performance Evaluation}
\label{sec:perfeval}

    We compare \name{} against three baselines. We first consider Aware~\cite{berger2020aware}, a state-of-the-art BFT-SMR reconfiguration framework that sanitizes latency matrices by replacing each value with its symmetric counterpart across the diagonal if it is larger. Then, we consider a PBFT baseline that uses the Newton~\cite{seibert2013newton} robust virtual coordinate system (VCS) to sanitize the latency matrix. 
  We also include a vanilla PBFT reference measured without attacks, denoted as ``No-Attack'' in relevant figures, which serves as an ideal (attack-free) performance baseline. Sections~\ref{sec:latency_poisoning}--\ref{sec:adaptation_network_changes} report results obtained using microbenchmark simulations that each isolate a key component of \name{} and evaluate its performance. More precisely, these sections assess the building blocks of Beware using real-world WAN latency datasets, both with and without faulty behaviors, and with real latency variations. These simulations are executed across 20 nodes on the DAS cluter~\cite{bal2016medium}, equipped with dual 8-core processors
 running at 2.4 GHz, 60\,GB of RAM, and the Rocky Linux operating system.
    Section~\ref{sec:deployment} reports the performance of a full Java implementation of \name{} in a deployment on the same academic cluster. We replay real-world WAN latency traces and show that \name{} performs well in an online setting.
    
    In our experiments, we consider nodes that communicate over network latencies randomly sampled from the WonderNetwork dataset~\cite{wondernetwork}, a comprehensive and up-to-date dataset sourced from a large global networking solution provider. 
    In an experiment, we position each node at a random site in the WonderNetwork dataset and compute the latency matrix based on the latency between sites. All code used for our simulations and deployment experiments is publicly available in an online repository.\footnote{\url{https://data.4tu.nl/datasets/68ee956e-4144-405a-8364-64c233b1555f}}

    \jd{Describe experimental setup here: hardware, network, OS, etc.}

\subsection{Consensus Latency under Latency Poisoning Attacks}
\label{sec:latency_poisoning}

Malicious nodes may report incorrect network latencies to enhance their perceived network latency and degrade consensus latency, thereby influencing the weight assignment process. 
We compare our sanitization method with previous ones using three attacks:
\begin{itemize}
    \item \textbf{Inflation attack:} Byzantine nodes increase all their link latencies.
    \item \textbf{Deflation attack:} Byzantine pairs of nodes decrease their link latencies.
    \item \textbf{Inflation-deflation attack:} Byzantine nodes increase their link latencies with correct nodes and decrease them with other Byzantine nodes. 
\end{itemize}

    

For these experiments, we use Aware's weight-assignment procedure to isolate the effect of latency sanitization mechanisms. For the latency inflation and deflation attacks, we use a factor of 3.
In all experiments, the system size is set to $N=3f{+}1+\Delta$, with $\Delta = f$, following Aware's configuration. 
%
%

\setlength{\intextsep}{5pt} 
\setlength{\textfloatsep}{5pt} 
\setlength{\floatsep}{5pt} 

\begin{table}
\centering
\caption{Latency attack impact measured as the percentage increase over baseline latency}
\label{tab:attack_overview}
\resizebox{\linewidth}{!}{%
\begin{tabular}{|l|c|c|c|}
\toprule
\textbf{Attack} & \textbf{Aware~\cite{berger2020aware}} & \textbf{Newton~\cite{seibert2013newton}} & \textbf{\name{}} (this work) \\
\midrule
Deflation              & 9.58\% & 14.38\% & 3.35\% \\
Inflation              & 9.54\% & 14.63\% & 2.82\% \\
Inflation-def.         & 9.52\% & 13.87\% & 2.82\% \\
\bottomrule
\end{tabular}%
}
\end{table}
\begin{figure}[t]
    \centering
    \begin{subfigure}{0.49\textwidth} 
        \includegraphics[width=\linewidth]{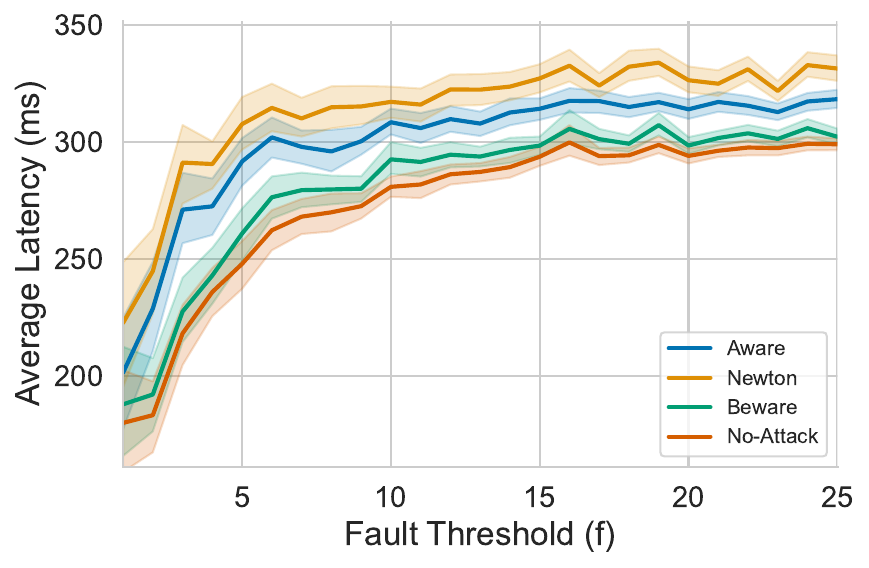}
        \caption{Average consensus latency with $f\in[1,25]$}
        \label{fig:vcs_differing_f}
    \end{subfigure}
    \hfill
    \begin{subfigure}{0.49\textwidth} 
        \includegraphics[width=\textwidth]{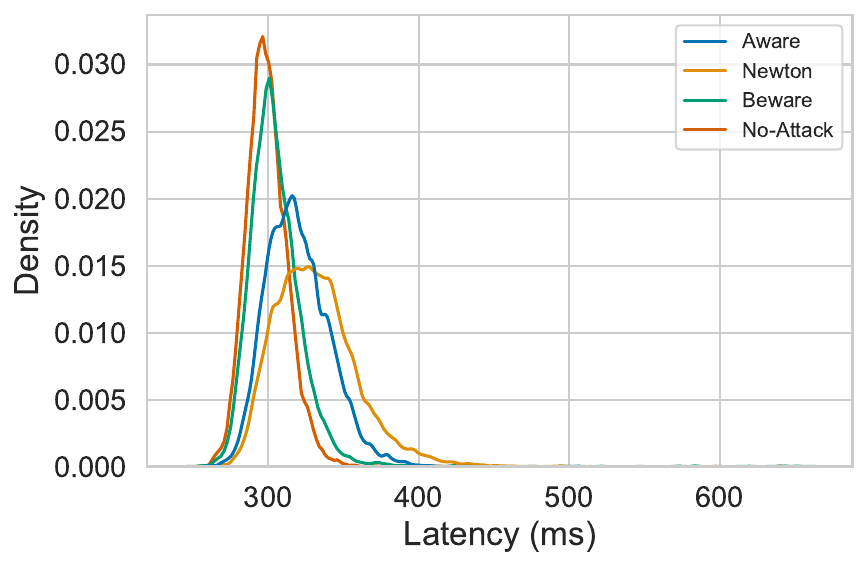}
        \caption{Latency probability density function (PDF) with $f=25$}
        \label{fig:vcs_inf_cds_dens}
    \end{subfigure}
    \caption{Consensus latency under inflation attack.}
\end{figure}
%
Fig.~\ref{fig:vcs_differing_f} shows the average consensus latency under the inflation attack for all algorithms with $f\in[1,25]$, $\Delta=f$, $n=3f+1+\Delta$, and 100 repetitions per setting. The shaded bands show the 95\% confidence interval of the mean across repetitions (computed as $\mu \pm 1.96\cdot \sigma/\sqrt{R}$ with $R=100$).
For all values of $f$ in $[1, 25]$, \name{} provides the most vigorous defense against the inflation attack, with Newton being the most vulnerable baseline and Aware performing in between. This observation also applies to the deflation and inflation-deflation attacks, as shown in Tab.~\ref{tab:attack_overview}. For all values of $f$, \name{} is the most robust defense against the inflation attack, limiting the average latency increase to $8.23$\,ms, compared to $26.68$\,ms for Aware and $38.90$\,ms for Newton. A similar pattern holds for the deflation and inflation-deflation attacks, as shown in Tab.~\ref{tab:attack_overview}.

%
%
Fig.~\ref{fig:vcs_inf_cds_dens} shows the probability density function of the consensus latency for all algorithms under an inflation attack, with $f=\Delta=25$ and $N=3f+1+\Delta=101$. We use $10,000$ repetitions. This figure uses $f=25$ to maximize the number of Byzantine replicas and obtain a stable view of the spread and tails of the latency distribution. \name{}'s distribution is very close to the optimal one, while Aware and Newton's distributions have a higher mean and a larger spread.   


\begin{figure*}[t]
    \centering
    \begin{subfigure}[t]{0.49\textwidth}
        \centering
        \includegraphics[width=0.95\linewidth]{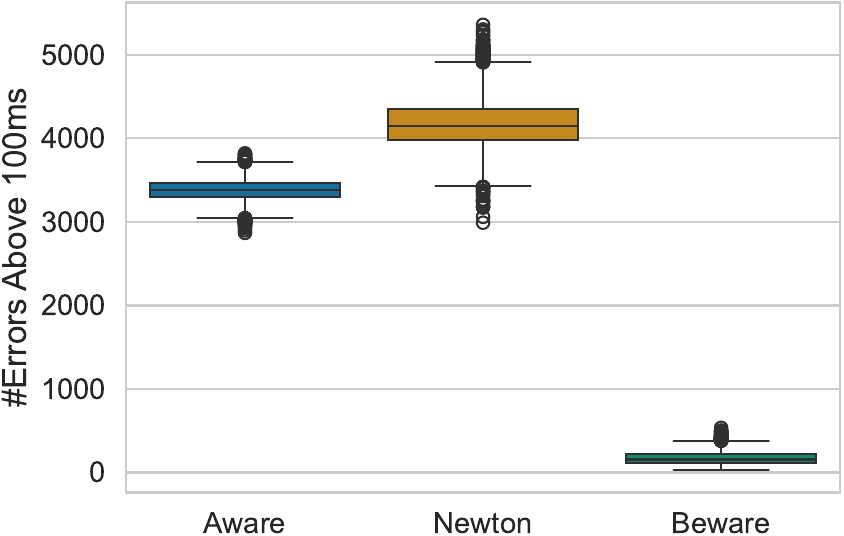}
        \caption{Number of latency matrix entries whose absolute error is above $100$\,ms} \label{fig:vcs_sanitization_a}
    \end{subfigure}
    \hfill
    \begin{subfigure}[t]{0.49\textwidth}
        \centering
        \includegraphics[width=0.95\linewidth]{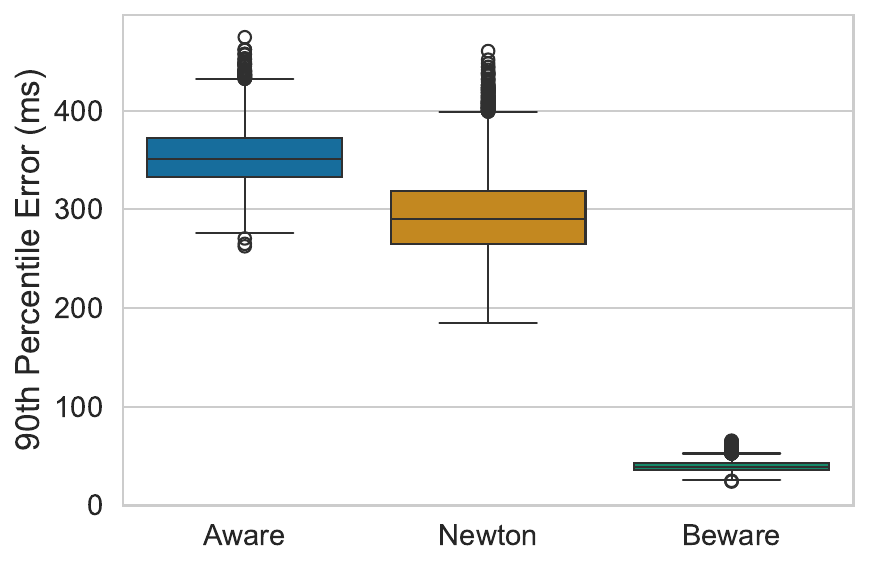}
        \caption{90th-percentile sanitization error}
        \label{fig:vcs_sanitization_b}
    \end{subfigure}
    \hspace*{\fill}

    \vspace{1em}

    \begin{subfigure}[t]{0.49\textwidth}
        \centering
        \includegraphics[width=0.95\linewidth]{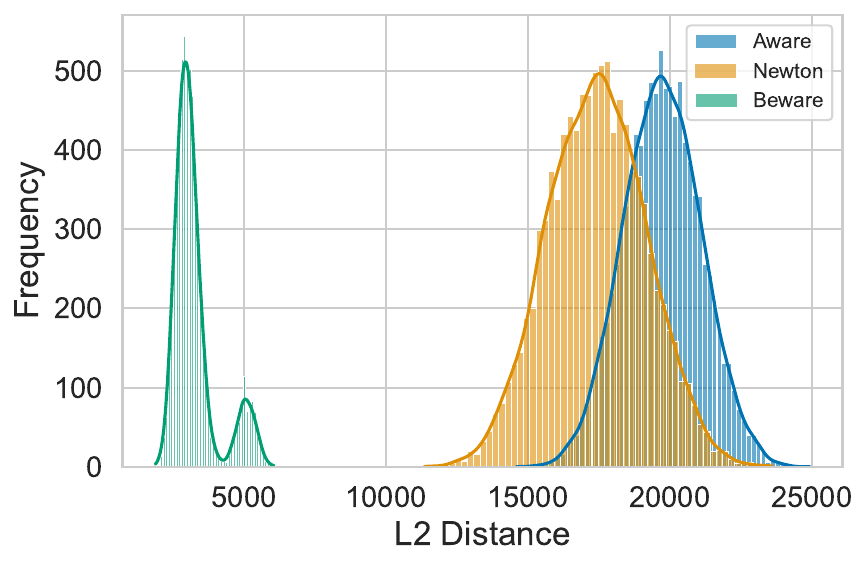}
        \caption{L2 distance between poisoned and sanitized matrices} \label{fig:vcs_sanitization_c}
    \end{subfigure}
    \hfill
    \begin{subfigure}[t]{0.49\textwidth}
        \centering
        \includegraphics[width=0.95\linewidth]{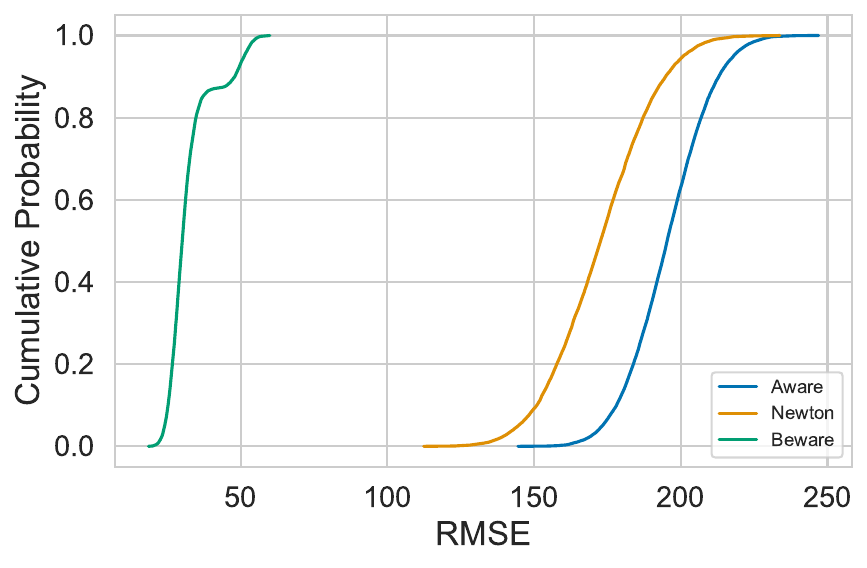}
        \caption{RMSE between sanitized reported and true latencies} \label{fig:vcs_sanitization_d}
    \end{subfigure}
    \hspace*{\fill}
    
    \caption{Accuracy of the sanitization mechanisms of \name{}, Aware, and Newton with inflation attack.
    }
    \label{fig:vcs_sanitization}
\end{figure*}

\subsection{Latency Matrix Sanitization Accuracy}

We consider the inflation attack and evaluate the distance of the sanitized (attacked) matrix using several distance metrics with $f=\Delta=25$, $N=101$, and 10,000 repetitions. The 90th-percentile error highlights the upper bound of sanitization errors, while the number of errors exceeding $100$\,ms captures significant outliers. The L2 distance provides an overall measure of deviation between the sanitized reported and true latency matrices, and the Root Mean Squared Error (RMSE) measures the average magnitude of deviations, with greater sensitivity to larger errors. 

Following the trend of the attack experiment,~\name{}'s sanitization approach filters out malicious latencies more effectively than Aware and Newton. Interestingly, Newton makes more errors above $100$\,ms (Fig.~\ref{fig:vcs_sanitization_a}), whilst Aware makes more significant errors, as indicated by the 90th percentile error rate (Fig.~\ref{fig:vcs_sanitization_b}). 
Finally, the Kernel Density Error plot for L2 distance (Fig.~\ref{fig:vcs_sanitization_c}) and the RMSE CDF plot (Fig.~\ref{fig:vcs_sanitization_d}) show that the cluster-based VCS sanitization approach significantly outperforms all other mechanisms, with no overlap in results. The slight bimodality (visible in both L2 and RMSE) stems from \name{}'s cluster selection algorithm: most runs identify a fully-honest cluster (low error), while a small number identify a cluster that includes some faulty forces (higher error); Aware/Newton do not have this discrete switch. Overall, these results highlight \name{}'s effectiveness in filtering adversarial latencies.

\subsection{Weight Reconfiguration}

To determine the most efficient optimization approach, we evaluate the reconfiguration performance of four \name{} variants against Aware. The variants use Simulated Annealing (SA), Differential Evolution (DE), and their ML-enhanced versions (SA+ML, DE+ML). Aware relies on SA. Experiments are run on a network with $3f+1+\Delta$ nodes ($f=10$, $\Delta=f$), measuring the average number of rounds needed to achieve latency improvements of 0--45\%. Each setup is repeated $20$ times. We evaluate the number of reconfiguration rounds required per baseline to reach a target latency improvement (in \%) with $f=10$ message-delaying faulty nodes. These faulty nodes follow the protocol honestly but add artificial delay to every message they send. This models an attack that targets Limitation~2, in which Byzantine nodes intentionally slow consensus. Reconfiguration frameworks cannot detect this attack from the latency matrix alone, since the nodes report correct latencies but then underperform during actual consensus.

Fig.~\ref{fig:rounds_methods} shows that all \name{} variants outperform Aware, requiring sublinear growth in rounds as thresholds increase. 
Each round on the x-axis of Fig.~\ref{fig:rounds_methods} is one optimizer iteration (i.e., one proposed reconfiguration), and each such iteration internally uses $10$ consensus rounds to measure consensus latency.
SA performs worst, e.g., $\approx 500$ rounds for 25\% improvement and $\approx 800$ for 40\%, due to inefficient random swaps. SA+ML reduces rounds drastically (25\% in $\approx 80$ rounds, 84\% faster than SA) but struggles to exceed 35\% improvements. DE is more efficient than SA, achieving up to 80\% savings for thresholds $\le 30\%$, though it plateaus at higher levels (400--600 rounds for 40\%). DE+ML delivers the best overall performance: fewer than 100 rounds for $\le 25\%$ improvements and $\approx 150$ rounds for 40\%, while maintaining efficiency across thresholds. In summary, SA scales poorly; SA+ML is fast at low thresholds but is limited; DE is stronger, but stalls at higher thresholds; and DE+ML consistently outperforms all others, combining speed and robustness.

\begin{figure}[t]
    \centering
    \begin{minipage}{0.49\textwidth}
        \includegraphics[width=\textwidth]{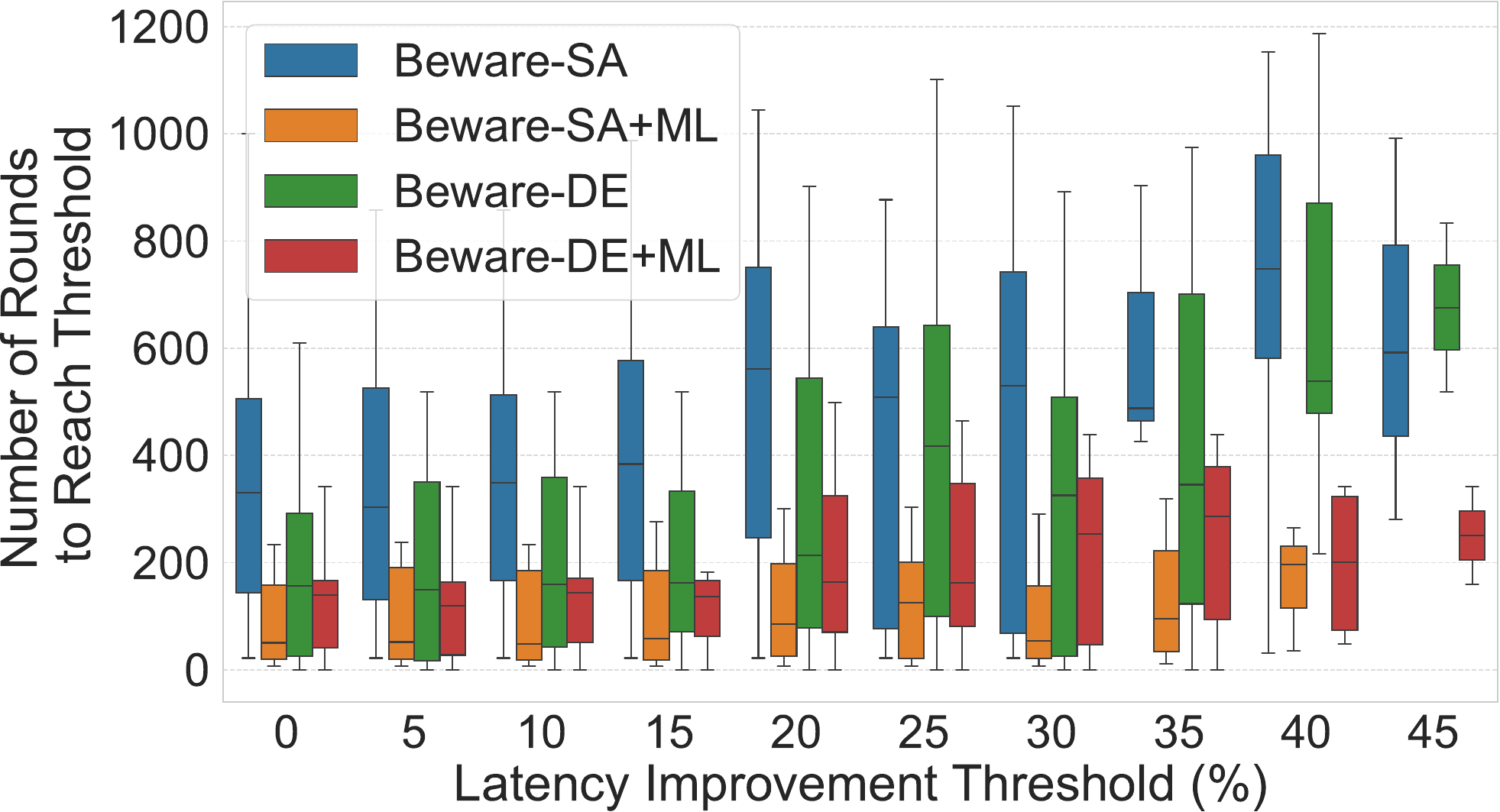}
        \caption{Number of rounds necessary to reach an improvement threshold compared to Aware (\%) with $f{=}10$ message-delaying faulty nodes. Lower is better. }
        \label{fig:rounds_methods}
    \end{minipage}
    \hfill
    \begin{minipage}{0.49\textwidth}
        \includegraphics[width=\linewidth]{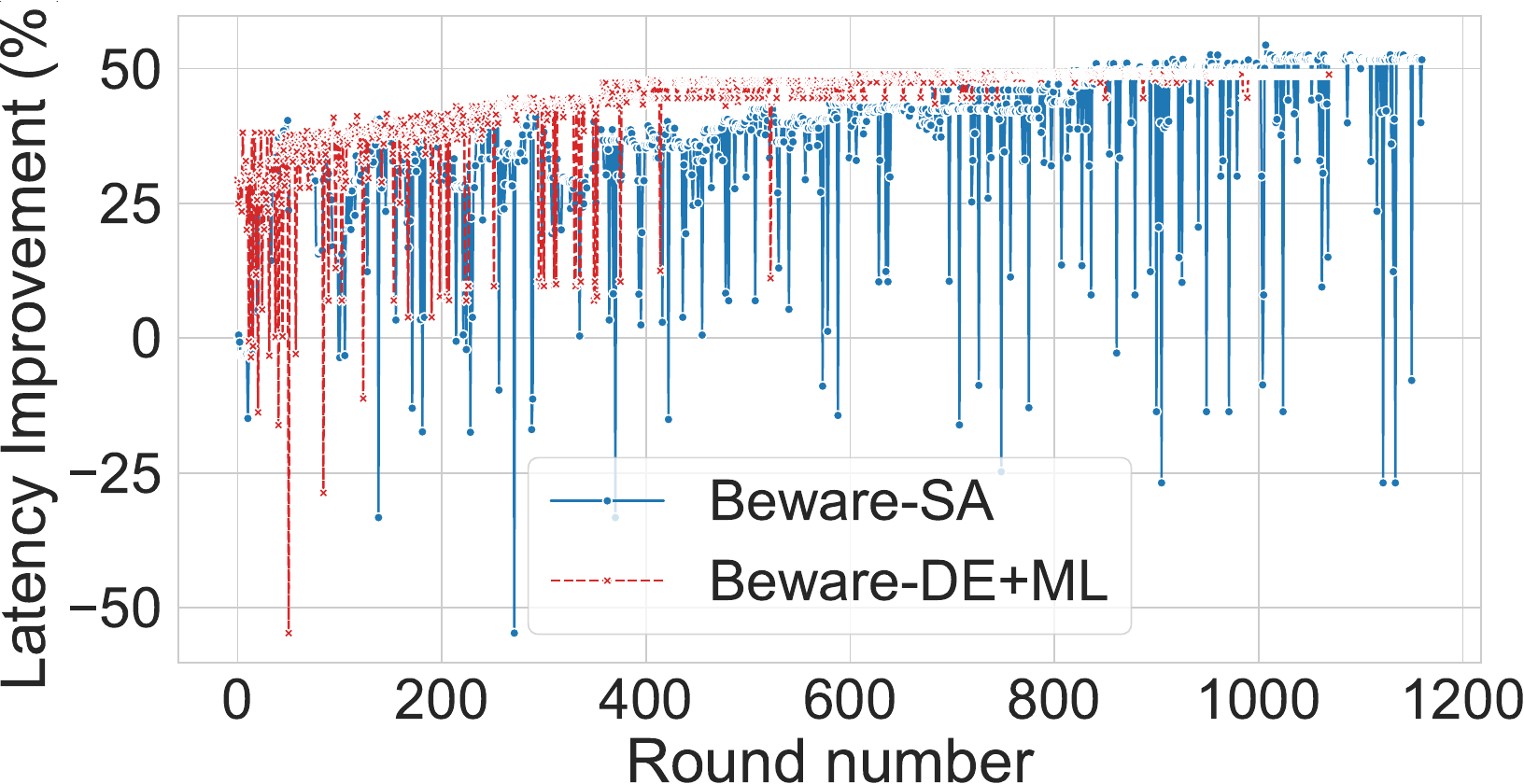}
        \caption{Latency improvement (\%) compared to Aware depending on the round number with $f{=}10$ message-delaying faulty nodes. Higher values are better.}
        \label{fig:fluctuations}
    \end{minipage}
\end{figure}



We further compare SA (baseline) with DE+ML by tracking latency improvements over rounds (Fig.~\ref{fig:fluctuations}) in a network of delaying replicas. SA (blue, solid) shows strong oscillations: even after $1,000$ rounds, improvements dip below $0\%$, and the median never surpasses $+25\%$. In contrast, DE+ML (red, dashed) exceeds $+20\%$ within $60$ rounds, reaches $+35\%$ before round $200$, and then stabilizes with only mild fluctuations.
These results highlight two key points: (i) DE+ML converges much faster, and (ii) it delivers stable improvements, crucial for production deployments. The negative spikes in SA stem from its annealing schedule, in which worse configurations are deliberately accepted to escape local minima, often persisting over multiple consecutive rounds. In DE+ML, the occasional negative dips are artifacts of single-round evaluation and do not indicate the adoption of a worse configuration.

\subsection{Adaptation to network changes}
\label{sec:adaptation_network_changes}




\begin{figure}{t}
        \centering
        \includegraphics[width=\linewidth]{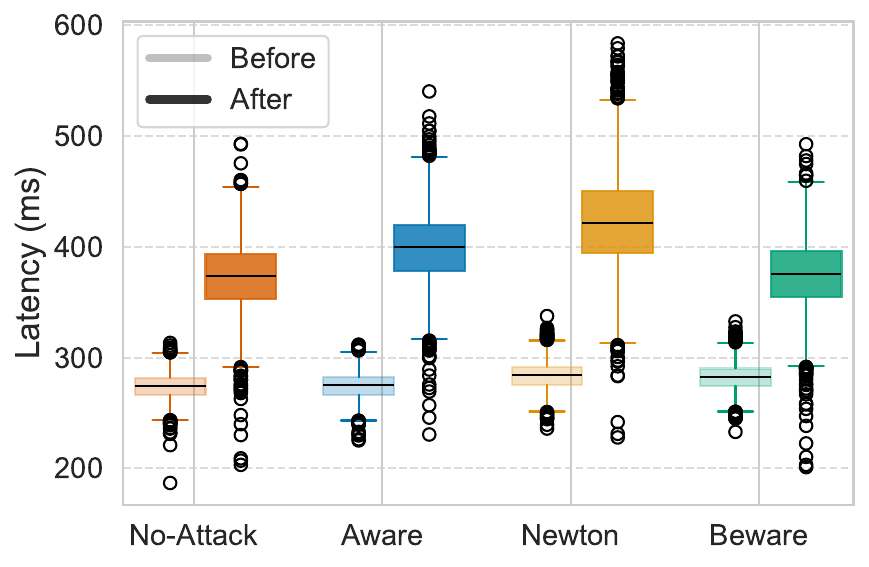}
        \caption{Consensus latency of all baselines before and after reconfiguration under inflation attack.}
        \label{fig:vcs_attack_reconfigure}
\end{figure}

We evaluate \name{} under dynamic conditions using $N{=}101$ and $f{=}25$. Starting from an initial topology, we trigger reconfiguration by introducing inter-cluster congestion via random latency forces.
We also test resilience against adversarial inflation attacks (Fig.~\ref{fig:vcs_attack_reconfigure}). Before reconfiguration, VCS-based methods introduce minor noise. After reconfiguration, however, \name{} closely tracks the optimal baseline, filtering adversarial latencies while adapting to genuine network changes. Competing approaches fail to match this robustness.
Our experiments on synthetic datasets and the WonderNetwork dataset validated \name{}’s performance with its current parameters. We believe its performance can be further improved through hyperparameter tuning on specific WAN latency datasets. 


\subsection{Deployment Experiment}
\label{sec:deployment}

To validate \name{} under realistic network conditions, we deployed it on the DAS academic cluster~\cite{bal2016medium} with $n = 21$ replicas and $f = 5$ Byzantine faults.
As mentioned, computing nodes are equipped with dual 8-core processors
 running at 2.4 GHz, have 60\,GB of RAM, and run the Rocky Linux operating system. Communication between nodes is facilitated
by Gigabit Ethernet (GbE) and InfiniBand (IB), but we emulate network latency based on the WonderNetwork dataset~\cite{wondernetwork} to emulate a wide-area network. Starting from a stable configuration, we provoke two scenarios that follow the trends observed in our simulations. The experiment lasts for 360 seconds.

First, we injected network congestion at $t=90\,s$ for 90 seconds, following the Pareto principle: $20\%$ of the replicas experienced a threefold increase in latency. This models localized network degradation in which only a small subset suffers heavy congestion. As shown in Fig.~\ref{fig:congestion_inflation}, consensus latency briefly spikes above $800$\,ms but quickly recovers as \name{} optimizes voting weights for faster replicas, stabilizing performance with minor fluctuations caused by reconfiguration.

Next, we performed an inflation attack at $t=240\,s$ for a 90-second period, during which the $f=5$ Byzantine replicas attempted to appear more reliable by artificially inflating reported latencies from other nodes (a factor-of-3 misreport). This manipulation initially degrades latency (peaking near $560$\,ms), but \name{} again detects the inconsistency between reported and observed latencies, adjusts weights, and restores consensus performance.

A PBFT baseline (not shown) would keep dishonest replicas influential, causing prolonged degradation.  
We expect Aware~\cite{berger2020aware} to show a similar initial latency spike under congestion and some sensitivity to inflated reports, but---like \name{}---to reweight toward honest replicas and recover. However, as our simulations indicate, \name{} achieves more effective configurations under these conditions, a finding corroborated by our deployment results.   
Overall, these results show that dynamic weighting enables \name{} to sustain low latency under both congestion and inflation attacks, whereas PBFT remains vulnerable.

\begin{figure}[t]
    \centering
    \includegraphics[width=\linewidth]{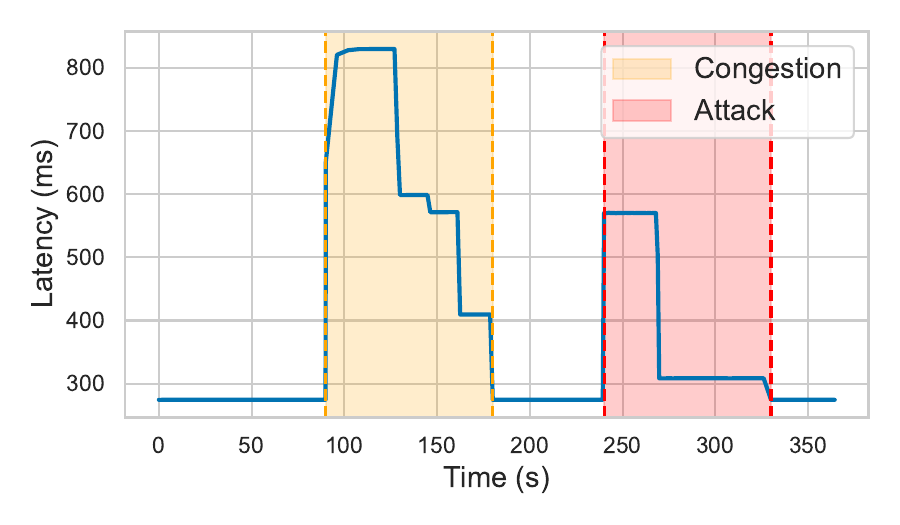}
    \caption{Beware's consensus latency under congestion and inflation attack.}
    \label{fig:congestion_inflation}
\end{figure}

\section{Related Work}
\rc{Add more related work}
\label{sec:sota}


\textbf{Leader Selection in BFT SMR.}
Prior work on BFT State-Machine Replication (BFT SMR) has focused primarily on reducing consensus latency and/or increasing throughput~\cite{lamport2005generalized,sutra2011fast,moraru2013there,du2014clock,liu2016leader,berger2020aware,neiheiser2021kauri,berger2024chasing}, often assuming trusted latency measurements from clients, third parties, or sysadmins. 
Several geo-replicated BFT-SMR systems reduce WAN latency/overhead through fast paths and communication reduction (e.g., SBFT~
\cite{gueta2019sbft}) or by removing the single-leader bottleneck via parallel leaders (e.g., Mir-BFT~\cite{stathakopoulou2022mir}, FnF-BFT~\cite{avarikioti2020fnf}), while bounded-delay protocols such as BFT-Mencius~\cite{milosevic2013bounded} target latency guarantees under stabilizing network conditions.
For instance, ARCHER~\cite{eischer2018latency} selects leaders using client-measured end-to-end response times (typically optimizing a target percentile), obtained via probe requests that run the full BFT protocol and complete when the client receives $f\!+\!1$ matching replies. It mitigates manipulation by requiring probes to be verifiable (preventing replicas from replying early by skipping steps) and by limiting the influence of faulty clients through robust aggregation of client reports. In contrast, Prime~\cite{amir2010prime}, Aardvark~\cite{aardvark}, and RBFT~\cite{aublin2013rbft} rely on replica-side measurements of primary connectivity.

In this work, we assumed that clients are uniformly distributed across the network and focused on reducing consensus latency. To minimize client end-to-end latency with more general client latency distributions, future work could integrate ARCHER's approach into our methods. The end-to-end latency of a configuration would then be the average time until clients receive $f+1$ replies from replicas. Unlike systems that primarily optimize client request ordering, we focus on replica-to-replica communication, which drives consensus latency and enables robust optimizations even when malicious replicas attempt to create bottlenecks. Kauri~\cite{neiheiser2021kauri} instead pipelines requests to lower latency, and other systems~\cite{liu2016leader,eischer2018latency} pursue similar optimizations but remain vulnerable to the attacks we study. Accountability-focused work~\cite{sheng2021bft,civit2023easy} detects faulty replicas through conflicting signed messages, but fails to capture those that delay consensus. In contrast, our system predicts the impact of faulty replicas on latency, limiting their influence to preserve performance. Related work has also shown that Byzantine replicas can satisfy safety and liveness while severely degrading performance via targeted slowdowns and protocol abuse (e.g., slow primary attacks~\cite{spinning}), motivating defenses that explicitly address performance under adversarial behavior~\cite{amir2008byzantine}.
While we apply our framework to PBFT, it could also be used with other consensus algorithms, e.g., to optimize the leader rotation scheme in HotStuff~\cite{yin2019hotstuff}. \\

\textbf{Membership and Protocol Reconfiguration.} Early work like Rampart~\cite{reiter1996distributing,reiter2002secure} established that BFT systems can manage their membership by having their members vote on it. More recently, ComChain~\cite{vizier2020comchain} implemented reconfiguration for permissioned blockchains on top of  DBFT~\cite{Crain+Gramoli+Larrea+Raynal:nca:2018}.  BMS~\cite{steinhoff2021bms} introduced a decentralized membership service applicable to both permissioned and permissionless environments, which active replicas can update via consensus and which clients can contact to identify the latest membership. 
However, Saltini~\cite{saltini2022bigfoot} demonstrated that improper reconfiguration timing can compromise liveness. Their protocol, BigFooT, mitigates this by strictly constraining adoptable memberships. In contrast, \name{} prioritizes optimizing role distribution and voting weights over changes to active membership. Incorporating a secure membership adaptation scheme that supports weighted voting into \name{}, for example, following Heydari et al.'s approach~\cite{heydari2023hard}, is future work. Such an extension would further amplify performance gains by effectively integrating additional, geographically diverse replicas. 
Similarly, \name{} might be extended to dynamically select a different consensus algorithm to execute~\cite{aublin2015next,bahsoun2015making,wu2024towards,wu2025bftbrain}. \\ 

%
%

\textbf{Weighted Voting Reconfiguration.}
In classical BFT, each node has one vote, with quorums defined as $\lceil \frac{n{+}f{+}1}{2} \rceil$ nodes~\cite{malkhi1998byzantine}. Weighted voting~\cite{gifford1979weighted,garciamolina1985assign,paris1986voting} instead assigns different weights to replicas. Garg and Bridgman~\cite{garg2011weighted} defined the weighted Byzantine agreement problem, showing that consensus is possible with more than $n/3$ failures, provided the total weight of the failures is below $1/3$. WHEAT~\cite{sousa2015separating} extends BFT-SMR with extra replicas and weighted quorums, but restricts weights to two values used at fixed frequencies. Aware~\cite{berger2020aware} optimizes latency by dynamically assigning WHEAT weights, with follow-ups for HotStuff~\cite{nischwitz2022raising} and workload-adaptive systems~\cite{kostler2023fluidity}. 

Developed concurrently with our work, Cabinet~\cite{zhang2025cabinet} and OptiLog~\cite{gogada2026optilog} also employ dynamic voting weights to accelerate consensus. 

Cabinet~\cite{zhang2025cabinet} is designed for crash-tolerant environments, while our work addresses the more complex challenges of Byzantine faults. In Cabinet, the reweighting process is centralized: the leader unilaterally computes and distributes weights based on a geometric sequence. In contrast, our reconfiguration process is hardened against manipulation and supports a broader range of voting weights.
Our theoretical framework for weighted dissemination quorums (see Sec.~\ref{sec:theory}) can be adapted to crash-tolerant systems of $2f + 1 + \Delta$ nodes, provided the weights and the quorum weight $Q_T$ satisfy 
\(
    \frac{1}{2} \sum_{i=1}^{n} w_i < Q_T \le \sum_{i=1}^{n} w_i - \sum_{i=1}^{f} w_i.
\) 
Notably, Cabinet specifically targets systems of $2f + 1$ nodes. In such a deployment, the simple majority quorum weight ($\sum w_i / 2$) utilized by Cabinet is correct. However, in larger systems where $\Delta > 0$, which we target for higher latency improvements, the threshold used must be strictly larger than $\sum w_i / 2$ for safety, as indicated by our formulas.

Like \name{}, OptiLog~\cite{gogada2026optilog} is designed for Byzantine environments and shares several of its objectives. However, it focuses on assigning specific roles (e.g., root or internal nodes in a tree-based protocol like Kauri~\cite{neiheiser2021kauri}) to replicas. OptiLog uses Aware's latency sanitization matrix, which we have shown to be vulnerable to latency poisoning attacks, while \name{} relies on a VCS-based latency matrix sanitization. \name{} replicates the reconfiguration process, whereas OptiLog relies on a leader-based reconfiguration and provides accountability through a global log that must be replayed for verification. Nodes that do not behave as expected, based on their communication latency, are eventually eliminated in \name{} thanks to its use of machine learning. Instead, OptiLog uses a suspicion mechanism.    

Extending \name{} to support throughput-oriented optimization is an interesting future direction. This would require the system to be bandwidth-aware. However, throughput cannot be sanitized as latency can. We envision developing specialized distributed auditing mechanisms to produce reliable throughput matrices. Once sanitized, these metrics could be integrated into a multi-criteria objective function, for instance, by maximizing a linear combination of throughput and inverse latency. Dynamic reweighting or stake-based weights have also been studied in permissionless Byzantine settings, e.g., by FlashConsensus~\cite{berger2024chasing}, Heydari et al.~\cite{heydari2023hard}, and Swiper~\cite{tonkikh2024swiper}, and it might be interesting to extend \name{} accordingly. \\

\textbf{Robust Virtual Coordinate Systems.}
Virtual Coordinate Systems (VCS) assign nodes virtual coordinates such that distances approximate network latencies, often using partial measurements, making them useful in large-scale P2P systems. Vivaldi~\cite{dabek2004vivaldi}, for example, models links as virtual springs and seeks a global equilibrium. These springs resist contraction or extension, respectively, when their end nodes are too close or too far apart, with a force proportional to the difference between their resting length (actual latency) and their current length (predicted latency). However, VCSs are vulnerable to data poisoning, where malicious nodes misreport latencies to launch inflation, deflation, or oscillation attacks~\cite{kaafar2006real,beckery2011applying,becker2011securing}. To improve robustness, systems such as Veracity~\cite{sherr2009veracity} validate coordinate updates via voting, whereas Newton~\cite{seibert2013newton} applies physics-inspired invariants to counter Byzantine manipulation. These defenses, however, are tailored to peer-to-peer environments and assume that faults occur only after a benign operation. We instead consider a stronger adversary that can misreport latencies from the very start.

\section{Conclusion}
\label{sec:conclusion}

This paper addresses key challenges in optimizing Byzantine fault-tolerant state-machine replication (BFT-SMR) systems to reduce consensus latency while maintaining resilience against adversaries. Building on existing reconfiguration frameworks, we propose a robust approach that leverages virtual coordinate systems, extended weighted voting schemes, and machine learning models to predict and adopt configurations that minimize latency. Our contributions include identifying sufficient conditions for Byzantine-weighted dissemination quorums, introducing advanced methods for handling Byzantine attacks on latency matrices, and providing an efficient framework that learns from past performance to identify configurations that would achieve the highest performance despite Byzantine nodes attempting to degrade it.
Through extensive simulations across diverse system sizes, network conditions, and attacks, we demonstrated that our methods significantly improve performance and resilience compared to existing solutions. By integrating a predictive model into the optimization loop, we further enhanced adaptability to network changes and mitigated the impact of adversarial behaviors. 
Future work may explore additional machine learning techniques and consider automatically reconfiguring alternative consensus algorithms, such as DAG-based algorithms. 




\bibliographystyle{IEEEtran}
\bibliography{biblio}

@techreport{lamport2005generalized,
	title        = {Generalized consensus and Paxos},
	author       = {Lamport, Leslie},
	year         = 2005,
    institution  = {Microsoft}
}

@inproceedings{garg2011weighted,
  author       = {Vijay K. Garg and
                  John Bridgman},
  title        = {The Weighted Byzantine Agreement Problem},
  booktitle    = {IPDPS},
  year         = {2011},
  doi          = {10.1109/IPDPS.2011.57}
}

@article{saltini2022bigfoot,
  title={{BigFooT}: A robust optimal-latency {BFT} blockchain consensus protocol with dynamic validator membership},
  author={Saltini, Roberto},
  journal={Computer Networks},
  volume={204},
  pages={108632},
  year={2022},
  publisher={Elsevier}
}

@article{reiter2002secure,
  title={A secure group membership protocol},
  author={Reiter, Michael K},
  journal={IEEE TSE},
  volume={22},
  number={1},
  pages={31--42},
  year={2002},
  publisher={IEEE}
}

@article{reiter1996distributing,
  title={Distributing trust with the Rampart toolkit},
  author={Reiter, Michael},
  journal={Communications of the ACM},
  volume={39},
  number={4},
  pages={71--74},
  year={1996},
  publisher={ACM New York, NY, USA}
}

@article{vizier2020comchain,
  title={{ComChain}: A blockchain with {Byzantine} fault-tolerant reconfiguration},
  author={Vizier, Guillaume and Gramoli, Vincent},
  journal={Concurrency and Computation: Practice and Experience},
  volume={32},
  number={12},
  pages={e5494},
  year={2020},
  publisher={Wiley Online Library}
}

@article{steinhoff2021bms,
  title={{BMS}: Secure decentralized reconfiguration for blockchain and {BFT} systems},
  author={Steinhoff, Selma and Stathakopoulou, Chrysoula and Pavlovic, Matej and Vukoli{\'c}, Marko},
  journal={arXiv preprint arXiv:2109.03913},
  year={2021}
}

@article{storn1997differential,
  title={Differential evolution--a simple and efficient heuristic for global optimization over continuous spaces},
  author={Storn, Rainer and Price, Kenneth},
  journal={Journal of global optimization},
  volume={11},
  pages={341--359},
  year={1997},
  publisher={Springer}
}

@inproceedings{xgboost,
	title        = {{XGBoost}: A Scalable Tree Boosting System},
	author       = {Chen, Tianqi and Guestrin, Carlos},
	year         = 2016,
	booktitle    = {SIGKDD}
}

@inproceedings{paris1986voting,
	title        = {Voting with Witnesses: {A} Constistency Scheme for Replicated Files},
	author       = {Jehan{-}Fran{\c{c}}ois P{\^{a}}ris},
	year         = 1986,
	booktitle    = {ICDCS},
}

@inproceedings{becker2011securing,
	title        = {Securing application-level topology estimation networks: Facing the frog-boiling attack},
	author       = {Becker, Sheila and Seibert, Jeff and Nita-Rotaru, Cristina and State, Radu},
	year         = 2011,
	booktitle    = {RAID}
}

@article{lamport2010reconfiguring,
	title        = {Reconfiguring a state machine},
	author       = {Lamport, Leslie and Malkhi, Dahlia and Zhou, Lidong},
	year         = 2010,
	journal      = {ACM SIGACT News},
	publisher    = {ACM New York, NY, USA},
	volume       = 41,
	number       = 1,
	pages        = {63--73}
}

@inproceedings{beckery2011applying,
	title        = {Applying game theory to analyze attacks and defenses in virtual coordinate systems},
	author       = {Beckery, Sheila and Seibert, Jeff and Zage, David and Nita-Rotaru, Cristina and State, Radu},
	year         = 2011,
	booktitle    = {DSN}
}

@inproceedings{silva2021threat,
	title        = {Threat adaptive byzantine fault tolerant state-machine replication},
	author       = {Silva, Douglas Simoes and Graczyk, Rafal and Decouchant, J{\'e}r{\'e}mie and V{\"o}lp, Marcus and Esteves-Verissimo, Paulo},
	year         = 2021,
	booktitle    = {SRDS}
}

@inproceedings{castro1999practical,
	title        = {Practical byzantine fault tolerance},
	author       = {Castro, Miguel and Liskov, Barbara},
	year         = 1999,
	booktitle    = {OSDI}
}

@article{zhang2025cabinet,
  title={Cabinet: Dynamically Weighted Consensus Made Fast},
  author={Zhang, Gengrui and Zhang, Shiquan and Bachras, Michail and Zhang, Yuqiu and Jacobsen, Hans-Arno},
  journal={VLDB},
  volume={18},
  number={5},
  pages={1439--1452},
  year={2025},
  publisher={VLDB Endowment}
}

@inproceedings{sheng2021bft,
	title        = {{BFT} protocol forensics},
	author       = {Sheng, Peiyao and Wang, Gerui and Nayak, Kartik and Kannan, Sreeram and Viswanath, Pramod},
	year         = 2021,
	booktitle    = {CCS}
}

@article{civit2023easy,
	title        = {As easy as {ABC}: Optimal (A)ccountable (B)yzantine (C)onsensus is easy!},
	author       = {Civit, Pierre and Gilbert, Seth and Gramoli, Vincent and Guerraoui, Rachid and Komatovic, Jovan},
	year         = 2023,
	journal      = {JPDC},
	volume       = 181,
	pages        = 104743
}

@inproceedings{sutra2011fast,
	title        = {Fast genuine generalized consensus},
	author       = {Sutra, Pierre and Shapiro, Marc},
	year         = 2011,
	booktitle    = {SRDS}
}

@inproceedings{moraru2013there,
	title        = {There is more consensus in egalitarian parliaments},
	author       = {Moraru, Iulian and Andersen, David G and Kaminsky, Michael},
	year         = 2013,
	booktitle    = {SOSP}
}

@inproceedings{yin2019hotstuff,
	title        = {{HotStuff}: {BFT} consensus with linearity and responsiveness},
	author       = {Yin, Maofan and Malkhi, Dahlia and Reiter, Michael K and Gueta, Guy Golan and Abraham, Ittai},
	year         = 2019,
	booktitle    = {PODC}
}

@inproceedings{tonkikh2024swiper,
  title={Swiper: a new paradigm for efficient weighted distributed protocols},
  author={Tonkikh, Andrei and Freitas, Luciano},
  booktitle={PODC},
  year={2024}
}

@misc{wondernetwork,
  title        = {WonderNetwork},
  howpublished = {https://wondernetwork.com/},
  author       = {Reinheimer, Paul and Roberts, Will}
}

@inproceedings{bessani2014state,
	title        = {State machine replication for the masses with {BFT-SMART}},
	author       = {Bessani, Alysson and Sousa, Jo{\~a}o and Alchieri, Eduardo},
	year         = 2014,
	booktitle    = {DSN}
}

@inproceedings{heydari2023hard,
  title={How hard is asynchronous weight reassignment?},
  author={Heydari, Hasan and Silvestre, Guthemberg and Bessani, Alysson},
  booktitle={ICDCS},
  year={2023}
}

@article{garciamolina1985assign,
	title        = {How to Assign Votes in a Distributed System},
	author       = {Hector Garcia{-}Molina and Daniel Barbar{\'{a}}},
	year         = 1985,
	journal      = {J. {ACM}},
	volume       = 32,
	number       = 4,
	pages        = {841--860},
	doi          = {10.1145/4221.4223}
}

@inproceedings{du2014clock,
	title        = {Clock-RSM: Low-latency inter-datacenter state machine replication using loosely synchronized physical clocks},
	author       = {Du, Jiaqing and Sciascia, Daniele and Elnikety, Sameh and Zwaenepoel, Willy and Pedone, Fernando},
	year         = 2014,
	booktitle    = {DSN}
}

@article{liu2016leader,
	title        = {Leader set selection for low-latency geo-replicated state machine},
	author       = {Liu, Shengyun and Vukoli{\'c}, Marko},
	year         = 2016,
	journal      = {IEEE TPDS},
	publisher    = {IEEE},
	volume       = 28,
	number       = 7,
	pages        = {1933--1946}
}

@inproceedings{eischer2018latency,
	title        = {Latency-aware leader selection for geo-replicated Byzantine fault-tolerant systems},
	author       = {Eischer, Michael and Distler, Tobias},
	year         = 2018,
	booktitle    = {DSN-W}
}

@inproceedings{neiheiser2021kauri,
	title        = {Kauri: Scalable bft consensus with pipelined tree-based dissemination and aggregation},
	author       = {Neiheiser, Ray and Matos, Miguel and Rodrigues, Lu{\'\i}s},
	year         = 2021,
	booktitle    = {SOSP}
}

@article{dabek2004vivaldi,
	title        = {Vivaldi: A decentralized network coordinate system},
	author       = {Dabek, Frank and Cox, Russ and Kaashoek, Frans and Morris, Robert},
	year         = 2004,
	journal      = {CCR},
	volume       = 34,
	number       = 4,
	pages        = {15--26},
	organization = {ACM SIGCOMM}
}

@article{seibert2013newton,
	title        = {Newton: securing virtual coordinates by enforcing physical laws},
	author       = {Seibert, Jeff and Becker, Sheila and Nita-Rotaru, Cristina and State, Radu},
	year         = 2013,
	journal      = {ToN},
	publisher    = {IEEE/ACM},
	volume       = 22,
	number       = 3,
	pages        = {798--811}
}

@inproceedings{berger2024chasing,
	title        = {Chasing Lightspeed Consensus: Fast Wide-Area Byzantine Replication with Mercury},
	author       = {Berger, Christian and Rodrigues, L{\'\i}vio and Reiser, Hans P and Cogo, Vinicius and Bessani, Alysson},
	year         = 2024,
	booktitle    = {Middleware}
}

@inproceedings{gifford1979weighted,
	title        = {Weighted voting for replicated data},
	author       = {Gifford, David K},
	year         = 1979,
	booktitle    = {SOSP}
}

@inproceedings{nischwitz2022raising,
	title        = {Raising the AWAREness of BFT Protocols for Soaring Network Delays},
	author       = {Nischwitz, Martin and Esche, Marko and Tschorsch, Florian},
	year         = 2022,
	booktitle    = {LCN}
}

@inproceedings{kostler2023fluidity,
	title        = {Fluidity: Location-Awareness in Replicated State Machines},
	author       = {K{\"o}stler, Johannes and Reiser, Hans P and Hauck, Franz J and Habiger, Gerhard},
	year         = 2023,
	booktitle    = {SAC}
}

@inproceedings{sherr2009veracity,
	title        = {Veracity: Practical Secure Network Coordinates via Vote-based Agreements},
	author       = {Sherr, Micah and Blaze, Matt and Loo, Boon Thau},
	year         = 2009,
	booktitle    = {Usenix ATC}
}

@inproceedings{aardvark,
	title        = {Making Byzantine fault tolerant systems tolerate Byzantine faults},
	author       = {Clement, Allen and Wong, Edmund and Alvisi, Lorenzo and Dahlin, Mike and Marchetti, Mirco and others},
	year         = 2009,
	booktitle    = {NSDI}
}

@inproceedings{aublin2013rbft,
	title        = {{RBFT}: Redundant byzantine fault tolerance},
	author       = {Aublin, Pierre-Louis and Mokhtar, Sonia Ben and Qu{\'e}ma, Vivien},
	year         = 2013,
	booktitle    = {ICDCS}
}

@article{amir2010prime,
	title        = {Prime: Byzantine replication under attack},
	author       = {Amir, Yair and Coan, Brian and Kirsch, Jonathan and Lane, John},
	year         = 2010,
	journal      = {TDSC},
	publisher    = {IEEE},
	volume       = 8,
	number       = 4,
	pages        = {564--577}
}

@inproceedings{bahsoun2015making,
	title        = {Making BFT protocols really adaptive},
	author       = {Bahsoun, Jean-Paul and Guerraoui, Rachid and Shoker, Ali},
	year         = 2015,
	booktitle    = {IPDPS}
}

@inproceedings{kaafar2006real,
	title        = {Real attacks on virtual networks: Vivaldi out of tune},
	author       = {Kaafar, Mohamed Ali and Mathy, Laurent and Turletti, Thierry and Dabbous, Walid},
	year         = 2006,
	booktitle    = {SIGCOMM LSAD Workshop}
}

@article{berger2020aware,
	title        = {{AWARE}: Adaptive wide-area replication for fast and resilient {Byzantine} consensus},
	author       = {Berger, Christian and Reiser, Hans P and Sousa, Jo{\~a}o and Bessani, Alysson Neves},
	year         = 2020,
	journal      = {IEEE TDSC},
	publisher    = {IEEE}
}

@inproceedings{sousa2015separating,
	title        = {Separating the WHEAT from the chaff: An empirical design for geo-replicated state machines},
	author       = {Sousa, Jo{\~a}o and Bessani, Alysson},
	year         = 2015,
	booktitle    = {SRDS}
}

@inproceedings{spinning,
	title        = {Spin One's Wheels? {Byzantine} Fault Tolerance with a Spinning Primary},
	author       = {Veronese, Giuliana Santos and Correia, Miguel and Bessani, Alysson Neves and Lung, Lau Cheuk},
	year         = 2009,
	booktitle    = {SRDS}
}

@inproceedings{cox2025catalyst,
  title={Catalyst: Asynchronous Byzantine-Robust Federated Learning},
  author={Cox, Bart and M{\u{a}}lan, Abele and Chen, Lydia Y and Decouchant, J{\'e}r{\'e}mie},
  booktitle={BigData},
  year={2025}
}

@inproceedings{gogada2026optilog,
  title={{OptiLog}: Assigning Roles in Byzantine Consensus},
  booktitle={EuroSys},
  author={Gogada, Hanish and Berger, Christian and Jehl, Leander and Reiser, Hans P and Meling, Hein},
  year={2026}
}

@inproceedings{Castro+Liskov:osdi:pbft:1999,
	title        = {Practical Byzantine Fault Tolerance},
	author       = {Miguel Castro and Barbara Liskov},
	year         = 1999,
	booktitle    = {OSDI}
}

@inproceedings{campello2013density,
	title        = {Density-Based Clustering Based on Hierarchical Density Estimates},
	author       = {Ricardo J. G. B. Campello and Davoud Moulavi and J{\"{o}}rg Sander},
	year         = 2013,
	booktitle    = {{PAKDD}}
}

@inproceedings{nguyen2022flame,
	title        = {{FLAME}: Taming backdoors in federated learning},
	author       = {Nguyen, Thien Duc and Rieger, Phillip and De Viti, Roberta and Chen, Huili and Brandenburg, Bj{\"o}rn B and Yalame, Hossein and M{\"o}llering, Helen and Fereidooni, Hossein and Marchal, Samuel and Miettinen, Markus and others},
	year         = 2022,
	booktitle    = {USENIX Security}
}

@article{fereidooni2023freqfed,
	title        = {FreqFed: A Frequency Analysis-Based Approach for Mitigating Poisoning Attacks in Federated Learning},
	author       = {Fereidooni, Hossein and Pegoraro, Alessandro and Rieger, Phillip and Dmitrienko, Alexandra and Sadeghi, Ahmad-Reza},
	year         = 2023,
	journal      = {arXiv preprint arXiv:2312.04432}
}

@inproceedings{Crain+Gramoli+Larrea+Raynal:nca:2018,
	title        = {{DBFT:} Efficient Leaderless Byzantine Consensus and its Application to Blockchains},
	author       = {Tyler Crain and Vincent Gramoli and Mikel Larrea and Michel Raynal},
	year         = 2018,
  	booktitle    = {NCA},
	doi          = {10.1109/NCA.2018.8548057},
	bibsource    = {dblp computer science bibliography, https://dblp.org}
}

@inproceedings{civit2021polygraph,
	title        = {Polygraph: Accountable byzantine agreement},
	author       = {Civit, Pierre and Gilbert, Seth and Gramoli, Vincent},
	year         = 2021,
	booktitle    = {ICDCS}
}

@article{malkhi1998byzantine,
	title        = {Byzantine quorum systems},
	author       = {Malkhi, Dahlia and Reiter, Michael},
	year         = 1998,
	journal      = {Distributed computing},
	publisher    = {Springer},
	volume       = 11,
	number       = 4,
	pages        = {203--213}
}

@article{wu2024towards,
  title={Towards truly adaptive byzantine fault-tolerant consensus},
  author={Wu, Chenyuan and Qin, Haoyun and Javad Amiri, Mohammad and Thau Loo, Boon and Malkhi, Dahlia and Marcus, Ryan},
  journal={ACM SIGOPS Operating Systems Review},
  volume={58},
  number={1},
  pages={15--22},
  year={2024},
  publisher={ACM New York, NY, USA}
}

@inproceedings{wu2025bftbrain,
  title={{BFTBrain}: Adaptive {BFT} consensus with reinforcement learning},
  author={Wu, Chenyuan and Qin, Haoyun and Amiri, Mohammad Javad and Loo, Boon Thau and Malkhi, Dahlia and Marcus, Ryan},
  booktitle={NSDI},
  year={2025}
}

@article{aublin2015next,
  title={The next 700 BFT protocols},
  author={Aublin, Pierre-Louis and Guerraoui, Rachid and Kne{\v{z}}evi{\'c}, Nikola and Qu{\'e}ma, Vivien and Vukoli{\'c}, Marko},
  journal={ACM Transactions on Computer Systems (TOCS)},
  volume={32},
  number={4},
  pages={1--45},
  year={2015},
  publisher={ACM New York, NY, USA}
}

@article{de2022noise,
  title={Noise in the clouds: Influence of network performance variability on application scalability},
  author={De Sensi, Daniele and De Matteis, Tiziano and Taranov, Konstantin and Di Girolamo, Salvatore and Rahn, Tobias and Hoefler, Torsten},
  journal={POMACS},
  volume={6},
  number={3},
  pages={1--27},
  year={2022},
  publisher={ACM New York, NY, USA}
}

@inproceedings{uta2018performance,
  title={A performance study of big data workloads in cloud datacenters with network variability},
  author={Uta, Alexandru and Obaseki, Harry},
  booktitle={ICPE Companion},
  pages={113--118},
  year={2018}
}

@inproceedings{iosup2011performance,
  title={On the performance variability of production cloud services},
  author={Iosup, Alexandru and Yigitbasi, Nezih and Epema, Dick},
  booktitle={CCGrid},
  pages={104--113},
  year={2011},
  organization={IEEE}
}

@article{bal2016medium,
  title={A medium-scale distributed system for computer science research: Infrastructure for the long term},
  author={Bal, Henri and Epema, Dick and De Laat, Cees and Van Nieuwpoort, Rob and Romein, John and Seinstra, Frank and Snoek, Cees and Wijshoff, Harry},
  journal={Computer},
  volume={49},
  number={5},
  pages={54--63},
  year={2016},
  publisher={IEEE}
}

@inproceedings{amir2008byzantine,
  title={Byzantine replication under attack},
  author={Amir, Yair and Coan, Brian and Kirsch, Jonathan and Lane, John},
  booktitle={DSN},
  year={2008}
}

@inproceedings{gueta2019sbft,
  title={{SBFT}: A scalable and decentralized trust infrastructure},
  author={Gueta, Guy Golan and Abraham, Ittai and Grossman, Shelly and Malkhi, Dahlia and Pinkas, Benny and Reiter, Michael and Seredinschi, Dragos-Adrian and Tamir, Orr and Tomescu, Alin},
  booktitle={DSN},
  year={2019}
}

@article{stathakopoulou2022mir,
  title={Mir-{BFT}: Scalable and robust {BFT} for decentralized networks},
  author={Stathakopoulou, Chrysoula and David, Tudor and Pavlovic, Matej and Vukolic, Marko},
  journal={Journal of Systems Research},
  volume={2},
  number={1},
  pages={3},
  year={2022}
}

@article{avarikioti2020fnf,
  title={Fnf-bft: Exploring performance limits of BFT protocols},
  author={Avarikioti, Zeta and Heimbach, Lioba and Schmid, Roland and Vanbever, Laurent and Wattenhofer, Roger and Wintermeyer, Patrick},
  journal={arXiv preprint arXiv:2009.02235},
  year={2020}
}

@inproceedings{milosevic2013bounded,
  title={Bounded delay in byzantine-tolerant state machine replication},
  author={Milosevic, Zarko and Biely, Martin and Schiper, Andr{\'e}},
  booktitle={SRDS},
  year={2013}
}


\end{document}